\documentclass[aps,superscriptaddress,twocolumn,twoside,floatfix,prl,a4paper,longbibliography]{revtex4-1}

\usepackage{times}
\usepackage{epsfig}
\usepackage{amsfonts}
\usepackage{amsmath}
\usepackage{amssymb,amsthm}
\usepackage{mathtools}
\usepackage{xcolor,colortbl}
\usepackage{multirow}
\usepackage{braket}
\usepackage{latexsym}
\usepackage{natbib}
\usepackage{verbatim}
\usepackage{gensymb}

\usepackage{caption}
\usepackage{subcaption}
\usepackage{ragged2e}
\DeclareCaptionJustification{justified}{\justifying}
\usepackage{blkarray}
\usepackage{graphicx}

\newtheorem{theorem}{Theorem}
\newtheorem{corollary}[theorem]{Corollary}

\newtheorem{proposition}[theorem]{Proposition}

\newtheorem{Lemma}[theorem]{Lemma}

\usepackage{newpxtext,newpxmath}

\let\coloneqq\relax

\usepackage[utf8]{inputenc}
\usepackage{amsthm}
\usepackage{amssymb}
\usepackage{amsmath}
\usepackage{bbold}
\usepackage{bbm}
\usepackage[pdftex, backref=page]{hyperref}
\hypersetup{
    colorlinks=true, 
    linkcolor=blue,  
    citecolor=blue!40!green,
    filecolor=magenta,      
    urlcolor=cyan,
}
\usepackage{braket}
\usepackage{dsfont}
\usepackage{mathdots}
\usepackage{mathtools}
\usepackage{enumerate}
\usepackage[shortlabels]{enumitem}
\usepackage{csquotes}
\usepackage{stmaryrd}
\usepackage[cal=boondox]{mathalfa}
\usepackage{graphicx}
\usepackage{stackengine}
\usepackage{scalerel}
\usepackage{tensor}       
\usepackage{array}
\usepackage{makecell}
\newcolumntype{x}[1]{>{\centering\arraybackslash}p{#1}}
\usepackage{tikz}
\usepackage{pgfplots}
\usetikzlibrary{shapes.geometric, shapes.misc, positioning, arrows, arrows.meta, decorations.pathreplacing, decorations.pathmorphing, patterns, angles, quotes, calc}
\usepackage{booktabs}
\usepackage{xfrac}
\usepackage{siunitx}
\usepackage{centernot}
\usepackage{comment}
\usepackage{chngcntr}
\usepackage{caption}
\usepackage{subcaption}

\newtheorem{thm}{Theorem}
\newtheorem*{thm*}{Theorem}

\newtheorem*{prop*}{Proposition}

\newtheorem*{lemma*}{Lemma}

\newtheorem*{cor*}{Corollary}

\newtheorem*{cj*}{Conjecture}

\newtheorem*{Def*}{Definition}

\newtheorem*{question*}{Question}

\newtheorem*{problem*}{Problem}

\makeatletter
\def\thmhead@plain#1#2#3{%
  \thmname{#1}\thmnumber{\@ifnotempty{#1}{ }\@upn{#2}}%
  \thmnote{ {\the\thm@notefont#3}}}
\let\thmhead\thmhead@plain
\makeatother

\theoremstyle{definition}
\newtheorem{rem}[thm]{Remark}

\newtheorem{manualthminner}{Theorem}
\newenvironment{manualthm}[1]{%
  \renewcommand\themanualthminner{#1}%
  \manualthminner \it
}{\endmanualthminner}

\newcommand{\bb}{\begin{equation}\begin{aligned}\hspace{0pt}}
\newcommand{\bbb}{\begin{equation*}\begin{aligned}}
\newcommand{\ee}{\end{aligned}\end{equation}}
\newcommand{\eee}{\end{aligned}\end{equation*}}

\newcommand{\ketbra}[1]{\ket{#1}\!\!\bra{#1}}

\newcommand{\ketbraa}[2]{\ket{#1}\!\!\bra{#2}}

\renewcommand{\epsilon}{\varepsilon}

\newcommand{\id}{\mathds{1}}

\newcommand{\onelocc}{\mathrm{LOCC}_\to}
\newcommand{\locc}{\mathrm{LOCC}}
\newcommand{\sep}{\mathrm{SEP}}
\newcommand{\ppt}{\mathrm{PPT}}

\newcommand{\sepp}{\mathrm{NE}}

\newcommand{\kp}{\mathrm{KP}}

\newcommand{\ane}{\mathrm{ANE}}

\newcommand{\Esq}{E_{\text{\textit{sq}}}}

\DeclareMathAlphabet{\pazocal}{OMS}{zplm}{m}{n}

\DeclareMathOperator{\tr}{tr}

\newcommand{\FF}{\pazocal{F}}

\newcommand{\lsmatrix}{\left(\begin{smallmatrix}}
\newcommand{\rsmatrix}{\end{smallmatrix}\right)}

\stackMath

\stackMath

\makeatletter
\newcommand*\rel@kern[1]{\kern#1\dimexpr\macc@kerna}
\newcommand*\widebar[1]{%
  \begingroup
  \def\mathaccent##1##2{%
    \rel@kern{0.8}%
    \overline{\rel@kern{-0.8}\macc@nucleus\rel@kern{0.2}}%
    \rel@kern{-0.2}%
  }%
  \macc@depth\@ne
  \let\math@bgroup\@empty \let\math@egroup\macc@set@skewchar
  \mathsurround\z@ \frozen@everymath{\mathgroup\macc@group\relax}%
  \macc@set@skewchar\relax
  \let\mathaccentV\macc@nested@a
  \macc@nested@a\relax111{#1}%
  \endgroup
}

\counterwithin*{equation}{part}
\counterwithin*{thm}{part}
\counterwithin*{figure}{part}

\tikzset{meter/.append style={draw, inner sep=10, rectangle, font=\vphantom{A}, minimum width=30, line width=.8, path picture={\draw[black] ([shift={(.1,.3)}]path picture bounding box.south west) to[bend left=50] ([shift={(-.1,.3)}]path picture bounding box.south east);\draw[black,-latex] ([shift={(0,.1)}]path picture bounding box.south) -- ([shift={(.3,-.1)}]path picture bounding box.north);}}}
\tikzset{roundnode/.append style={circle, draw=black, fill=gray!20, thick, minimum size=10mm}}
\tikzset{squarenode/.style={rectangle, draw=black, fill=none, thick, minimum size=10mm}}

\definecolor{Blues5seq1}{RGB}{239,243,255}
\definecolor{Blues5seq2}{RGB}{189,215,231}
\definecolor{Blues5seq3}{RGB}{107,174,214}
\definecolor{Blues5seq4}{RGB}{49,130,189}
\definecolor{Blues5seq5}{RGB}{8,81,156}

\definecolor{Greens5seq1}{RGB}{237,248,233}
\definecolor{Greens5seq2}{RGB}{186,228,179}
\definecolor{Greens5seq3}{RGB}{116,196,118}
\definecolor{Greens5seq4}{RGB}{49,163,84}
\definecolor{Greens5seq5}{RGB}{0,109,44}

\definecolor{Reds5seq1}{RGB}{254,229,217}
\definecolor{Reds5seq2}{RGB}{252,174,145}
\definecolor{Reds5seq3}{RGB}{251,106,74}
\definecolor{Reds5seq4}{RGB}{222,45,38}
\definecolor{Reds5seq5}{RGB}{165,15,21}

\allowdisplaybreaks

\usepackage[most,breakable]{tcolorbox}
{\expandafter\ifstrequal\expandafter{#1}{orange}{\begin{tcolorbox}[colback=red!15,colframe=orange!15,breakable,enhanced]}{\begin{tcolorbox}[colback=Blues5seq1,colframe=Blues5seq5,breakable,enhanced]}}%
{\end{tcolorbox}}

\DeclareMathAlphabet{\cleancal}{OMS}{cmsy}{m}{n}

\makeatletter
\newcommand{\nocontentsline}[3]{}
\let\origcontentsline\addcontentsline
\newcommand\stoptoc{\let\addcontentsline\nocontentsline}
\newcommand\resumetoc{\let\addcontentsline\origcontentsline}
\makeatother

\begin{document}  

\addtocontents{toc}{\protect\setcounter{tocdepth}{-1}}
\title{Entanglement of flower states}

\author{Samrat Sen}
\affiliation{Scuola Normale Superiore, Piazza dei Cavalieri 7, 56126 Pisa, Italy}

\author{Ludovico Lami}
\affiliation{Scuola Normale Superiore, Piazza dei Cavalieri 7, 56126 Pisa, Italy}

\begin{abstract}
The mysterious nature of entanglement, one of the most prominent exquisitely quantum phenomena, is reflected in its intricate operational structure, with a hierarchy of classes of free operations that enable its manipulation at different levels of effectiveness. Here we use the class of `flower states', parametrised by their (even) local dimension $2k$, to shine light on some aspects of this varied landscape. We compute all the main entanglement measures for flower states, uncovering a large gap between all forms of distillable entanglement, equal to $1$ ebit independently of the local dimension, and the entanglement cost under local operations and classical communication (LOCC), known to be equal to $\log\big(2\sqrt{k}\big)$. Even under the strictly more powerful class of non-entangling (NE) operations, we show that their cost is still equal to $\log\big(1+\sqrt{k}\big)$, only about an ebit less than for LOCCs. This result, which we prove by calculating the recently introduced tempered entanglement negativity for these states, demonstrates the largest known `irreversibility gap', i.e.\ the difference between distillable entanglement and entanglement cost, under NE operations, equal to $\Theta\big(\frac12 \log d\big)$, with $d$ being the local dimension. A notable consequence is that the celebrated squashed entanglement is not a monotone under NE operations. Finally, we compute the exact cost under LOCC operations for flower states; this is given by the Schmidt number, which turns out to be additive over multiple copies and equal to $\min_{r|k} \log\left( r + \frac{k}{r} \right)$; for prime $k$ this reduces to $\log(k+1)$, about twice the standard LOCC cost. These last results leverage the uncertainty relations over cyclic groups proved by Tao and Meshulam.

\end{abstract}
\maketitle

\stoptoc

\section{Introduction}


Quantum entanglement is the foundational resource underpinning the advantages of quantum information theory, such as communication~\cite{bennett92,bennett93,bennett96}, communication complexity~\cite{Buhrman2001,Brukner2004}, measurement precision~\cite{Giovannetti2004}, computation~\cite{rauss01} and cryptography~\cite{ekert91}.  
Its manipulation is governed by free operations that reflect fundamental laboratory constraints. While local operations and classical communication (LOCC) provide the natural physical framework for this setting, 
    their intricate mathematical structure renders precise characterisations under LOCC notoriously intractable~\cite{Chitambar2014}. To circumvent these hurdles and, at the same time, study settings in which Alice and Bob are granted additional resources, several alternative sets of free operations have been investigated. 
    For instance, restricting the framework to one-way LOCC protocols—which permit only a single round of communication—establishes rigorous lower bounds on crucial entanglement properties. A prominent example is the hashing  bound~\cite{hashing}, which provides a fundamental lower bound for the number of ebits that can be distilled from many copies of a given quantum state.
	
	While restricting protocols to one-way LOCC provides an inner approximation that establishes rigorous lower bounds, one can alternatively relax the LOCC constraints 
    so as to obtain complementary upper bounds. PPT operations, introduced in the seminal works of Rains~\cite{rains99, rains01}, and separability-preserving or non-entangling operations (NE)~\cite{Brando2010} are paradigmatic examples of this top-down 
    approach.
	
	The computations do become simpler under these approximations, allowing us to derive much-needed, tight upper bounds on entanglement manipulation. However, 
    that is hardly the end of the story. The landscape of entanglement theory itself gets reshaped because of these approximations. For instance, bound entanglement, 
    a signature feature of the theory under LOCC operations, is absent under non-entangling (NE) 
    operations~\cite{Lami2024}, meaning that relaxing LOCC to NE unlocks 
    new distillation protocols and 
    enhances the power of entanglement distillation. The permissive power, naturally, has an impact on the dilution cost as well. A paradigmatic example of this is the antisymmetric state, dubbed the ‘universal counterexample’ in entanglement theory~\cite{counter}: under LOCC, it exhibits extreme irreversibility, with its distillable entanglement scaling as $\cleancal{O}(1/d)$ while its entanglement cost remains bounded below by a dimension-independent positive constant~\cite{Christandl2012}; however, the same state becomes perfectly reversible under PPT operations~\cite{audenaert2003entanglement}, owing to a sharp drop in its entanglement cost. 
	
 However, this painting of the landscape of entanglement theory is far from complete. To try to shed light on some of the many remaining blind spots, we need to expand our zoology of examples, investigating classes of quantum states with notable properties.
    In our work, we utilise the family of `flower states', a class of maximally correlated states parametrised by their even local dimension $d = 2k$, to systematically explore this intricate operational landscape. It is exceedingly rare to encounter non-trivial states for which each and every entanglement measure can be explicitly computed, and one of our main conceptual contributions is to show that flower states belong to this exclusive group.
	
	To begin with, we establish that the NE entanglement cost is $\frac{1}{2}\log d + o(\log d)$. 
    Conversely, the distillable entanglement remains exactly 1 ebit regardless of local dimension, a feature holding for both flower states and their generalisations (`generalised flower states'). Crucially, we uncover an explicit one-shot, one-way LOCC protocol that deterministically distils this single ebit, proving that the permissive power of NE operations is entirely redundant for distillation here. As an immediate consequence, we obtain a new bound on a fundamental constant characterising entanglement theory, the NE irreversibility gap $\Delta_{\mathrm{irr}}^{\sepp}$, which quantifies the largest achievable difference between the NE entanglement cost and the NE distillable entanglement, divided by the number of local qubits. Prior estimates yielded $\Delta_{\mathrm{irr}}^{\sepp} \geq \frac{2}{\log_2 3}-1 \approx 0.262$, while flower states provide the much stronger bound $\Delta_{\mathrm{irr}}^{\sepp} \geq 1/2$.   
    
	Furthermore, we address the role of squashed entanglement ($\Esq$)~\cite{tucci,christandl2005uncertainty} within this expanded operational landscape. Under LOCC, $\Esq$ famously bounds the operational rates, satisfying $E_d \leq \Esq \leq E_c$. Its inherent mathematical structure, such as additivity~\cite{Christandl2004}, monogamy~\cite{winter2004}, faithfulness~\cite{Brando2011} and asymptotic continuity~\cite{Alicki_2004}, makes it highly appealing. 
       Since NE operations are vastly more permissive than LOCC~\cite{Chitambar2014}, establishing a robust lower bound for the NE entanglement cost is a major structural challenge, and it was open whether squashed entanglement could be used in this way. Here, we answer this question in the negative: \emph{the squashed entanglement is not a monotone under NE operations}; 
    in fact, it can be strictly larger than the NE entanglement cost. Far from being a negative result, this actually demonstrates how the squashed entanglement is inextricably tailored to the physical constraints of local actions, establishing it as a strict structural separator between LOCC and its axiomatic outer approximations.
	
	Moving beyond the standard vanishing-error regime, into one with a strict zero-error constraint in the many-copy limit, one naturally expects the entanglement cost to increase. Surprisingly, we show that this is not the case for flower states under non-entangling operations, as the exact NE cost turns out to be equal to its corresponding vanishing-error counterpart. 
    
    In contrast, within the exact LOCC paradigm,  
    the entanglement cost exhibits a much richer structure.
   We reveal this through a detailed characterisation of the flower states and comparing it with 
  the standard LOCC cost. For the flower states, Christandl and Winter~\cite{christandl2005uncertainty} showed that the standard LOCC cost coincides exactly with the squashed entanglement, making these states the basis for two significant firsts: the first exact computation of squashed entanglement, and the first demonstration that this measure displays the peculiar effect of `locking'.

  In the zero-error setting, the exact entanglement cost is known to be given by the regularised Schmidt number, which is in general extremely difficult to calculate, due to submultiplicativity effects.
  Remarkably, we overcome this barrier for flower states by deriving an exact analytical expression for this quantity, proving that it is strictly additive over multiple copies and given by $\min_{r|k} \log\left( r + \frac{k}{r} \right)$, where the minimisation is over divisors of $k$. For prime $k$, this expression elegantly reduces to $\log(k+1)$, representing an approximate twofold increase over the standard LOCC cost. This demonstrates
  the asymptotically largest possible gap between exact LOCC dilution and distillation: the cost is $\log (1+d/2) \approx \log d$, while the distillable entanglement is $1$.  On the contrary, when $k$ is a perfect square, the exact LOCC cost coincides with its standard counterpart, and both equal $\log\big(2\sqrt{k}\big)$. These 
  results are rigorously established by leveraging the uncertainty relations over cyclic groups introduced by Tao~\cite{Tao2005} and Meshulam~\cite{meshulam}, thereby completing a comprehensive mapping of the entanglement properties of flower states across these varied operational regimes.

\section{Preliminaries}

 For a given state  $\rho$ acting on $\mathbb{C}^d$, we define the corresponding maximally correlated state $\Omega_{\rho}$, acting on  $\mathbb{C}^d \otimes \mathbb{C}^d$, as
\begin{equation}
    \Omega_{\rho} \coloneqq \sum_{i,j=1}^d \rho_{ij} |ii\rangle\langle jj| . \label{eq:1}
\end{equation}

In the present work, we direct our focus to a specific family of maximally correlated states associated with states of the form
\begin{equation}
    \rho_V \coloneqq \frac{1}{2k} \begin{pmatrix} \id & V^{\vphantom{\dagger}} \\ V^\dagger & \id \end{pmatrix} , \label{eq:22}
\end{equation}
where $V$ is an arbitrary $k \times k$ unitary matrix, and the total dimension $d = 2k$. For simplicity, we denote these maximally correlated states by $\Omega_{V} \coloneqq \Omega_{\rho_V}$. 

We refer to these constructions as `generalised flower states'. This nomenclature is motivated by the fact that choosing $V$ to be the $k$-dimensional Fourier transform
\begin{equation}
F_k\coloneqq \frac{1}{\sqrt{k}} \sum_{p,q=0}^{k-1} \omega_k^{-pq} |p\rangle\langle q| , \label{eq:24}
\end{equation}

where $\omega_k\coloneqq e^{-2\pi i/k}$ is the $k$-th root of unity, recovers the flower states $\Omega_{F_k}$ introduced in~\cite{horodecki2005locking}, and further studied by Christandl and Winter~\cite{christandl2005uncertainty} and by Yang et al.~\cite{yang2009squashed}. To make the connection with these works more explicit, we could express $\Omega_{F_k}$ as
\begin{equation}
\Omega_{F_k} = \frac{1}{2k} \sum_{i,i',j,j'} \langle j'|F_k^{i-i'}|j\rangle |ij\rangle\langle i'j'|_{A_1A_2} \otimes |ij\rangle\langle i'j'|_{B_1B_2}  \label{eq:25}
\end{equation}
where $i, i' \in \{0, 1\}$ and $1 \le j, j' \le k$. Here we split the local $2k$-dimensional systems $A = A_1A_2$ and $B = B_1B_2$ in two parts each, the first of which, \textit{i.e.}\ $A_1, B_1$, are single qubits, and the second of which, \textit{i.e.}\ $A_2, B_2$,  are $k$-dimensional. 

Remarkably, the squashed entanglement, the entanglement of formation, and even the LOCC entanglement cost of the flower states can be computed exactly: all of them turn out to equal $\log \big( 2\sqrt{k} \big)$ (\cite{christandl2005uncertainty}, Section III).

\section{
Non-entangling and PPT Operations} Having recalled the structural properties of flower states, we now evaluate their distillable entanglement and entanglement cost. For a given class of free operations $\cleancal{F}$, the distillable entanglement $E_{d,\cleancal{F}}(\rho)$ denotes the maximum rate at which two-qubit maximally entangled states  can be extracted from $\rho$ with asymptotically vanishing error using operations in $\FF$. Conversely, the entanglement cost $E_{c,\cleancal{F}}(\rho)$ characterises the minimum rate of ebits required to dilute copies of $\rho$ under the same asymptotic constraints. In the main manuscript, we focus specifically on non-entangling operations, which, by definition, are required to map separable states to separable states.
In the appendix, we discuss  PPT-preserving operations as well.

It is noteworthy to point out that, while the above quantities are more practically reasonable, one might also be interested theoretically in the exact, zero-error versions of these quantities:  $E^{\mathrm{exact}}_{d,\cleancal{F}}(\rho)$ and $E^{\mathrm{exact}}_{c,\cleancal{F}}(\rho)$. Despite the absolute theoretical nature of these quantities, these exact quantities become especially useful in establishing bounds since: 

\begin{equation}
E^{\mathrm{exact}}_{d,\cleancal{F}}(\rho)
\le
E_{d,\cleancal{F}}(\rho)
\le
E_{c,\cleancal{F}}(\rho)
\le
E^{\mathrm{exact}}_{c,\cleancal{F}}(\rho)
\end{equation}

For the exact entanglement cost under NE, it is governed by the standard robustness of entanglement ($R^s_{\cleancal{S}}$)~\cite{vidal1999}. As a reminder, the  standard robustness of entanglement  is defined as:
\begin{equation}
R^s_{\cleancal{S}}(\omega)\coloneqq \inf \{\tr L : L, \omega + L \in \cleancal{S}\} . \label{eq:3}
\end{equation}

Robustness provides essential bounds, ultimately letting us determine the exact entanglement cost under NE operations~\cite{Brando2010,nilanjana2011}, whereas to evaluate the standard NE cost, we explicitly compute the recently introduced tempered negativity~\cite{Lami2023} in  \hyperref[appendixA]{Appendices A} and \hyperref[appendixB]{B}, respectively. Remarkably, this measure coincides precisely with the exact NE cost, which we explicitly demonstrate in \hyperref[appendixC]{Appendix C} leading to the following theorem: 


\begin{theorem}\label{theo2}
Let $k \ge 2$. The flower state $\Omega_{F_k}$ satisfies
\begin{equation}
E_{d, \sepp}(\Omega_{F_k}) = E_R^\infty(\Omega_{F_k}) = E_R(\Omega_{F_k}) = \log 2, \label{eq:26}
\end{equation}
but
\bb
E_{c, \locc}(\Omega_{F_k}) &= \Esq (\Omega_{F_k}) \\
&= \log\left(2\sqrt{k}\right) \\
&> \log\left(1+\sqrt{k}\right) \\
&= E_{c,\sepp}^{\text{exact}}(\Omega_{F_k}) \\
&= E_{c,\sepp}(\Omega_{F_k}) \label{eq:27}
\ee

Also,
\bb\label{eqn10}
E_{c,\ppt}^{\text{exact}}(\Omega_{F_k}) &= \log \big\|\Omega_{F_k}^\Gamma\big\|_1\\
&= \log \big(1+\sqrt{k}\big) \\
&= E_{c,\sepp}^{\text{exact}}(\Omega_{F_k})\, ,
\ee
\end{theorem}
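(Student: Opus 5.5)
We will prove the chain by sandwiching each quantity between a lower and an upper bound that coincide.

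\emph{Distillable entanglement and relative entropy.} For the lower bound we exhibit a one-shot one-way LOCC protocol that outputs one ebit exactly. Alice measures $A_2$ in the computational basis and obtains $j$; she sends $j$ to Bob, who measures $B_2$ in the Fourier basis. On the qubits $A_1B_1$ the conditional post-measurement state should then be a rank-one maximally correlated state $\tfrac12(\ket{00}+e^{i\phi}\ket{11})$ with a phase $\phi$ known to Bob, which he removes. The reason is that $\rho_V$ has off-diagonal block $V$, and measuring $A_2$ in $\{\ket{j}\}$ and $B_2$ in $\{F_k^\dagger\ket{j'}\}$ aligns the two branches up to a phase because all entries of $F_k$ have modulus $1/\sqrt k$. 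I will check this in a short computation. It gives $1\le E_{d,\locc}\le E_{d,\sepp}$.

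For the upper bound I use $E_{d,\sepp}\le E_R^\infty\le E_R$; the first inequality is the standard Brand\~ao--Plenio-type bound for NE distillation under vanishing error. For a maximally correlated state, $E_R(\Omega_\rho)\le S(\mathrm{diag}\,\rho)-S(\rho)$, obtained by taking the dephased separable state $\sum_i\rho_{ii}\ketbra{ii}$. Here $\mathrm{diag}\,\rho_V$ is uniform, so $S(\mathrm{diag}\,\rho_V)=\log 2k$. The spectrum of $\rho_V$ is $\{1/k\}$ with multiplicity $k$ together with $0$, so $S(\rho_V)=\log k$. Hence $E_R\le\log 2$, and all the quantities in \eqref{eq:26} equal $\log 2$.

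\emph{The LOCC cost and squashed entanglement.} The first two lines of \eqref{eq:27} are the results of Christandl and Winter recalled in the Preliminaries. The strict inequality $2\sqrt k>1+\sqrt k$ holds because $k\ge 2$.

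\emph{NE and PPT costs.} For the upper bounds, the exact NE cost and the exact PPT cost are governed by the (regularised) standard robustness and by $\log\|\cdot^\Gamma\|_1$-type quantities (Audenaert--Plenio--Eisert). I will compute $\|\Omega_{F_k}^\Gamma\|_1$ directly. The partial transpose of a maximally correlated state decomposes into $1\times1$ blocks $\rho_{ii}$ on $\ket{ii}$ and $2\times2$ blocks $\lsmatrix 0&\rho_{ij}\\ \bar\rho_{ij}&0\rsmatrix$ on $\{\ket{ij},\ket{ji}\}$. Therefore
\[
\|\Omega_\rho^\Gamma\|_1=\sum_{i,j}|\rho_{ij}|=\tfrac{1}{2k}\big(2k+2k^2\cdot\tfrac{1}{\sqrt k}\big)=1+\sqrt k .
\]
Next I will build an explicit separable $L$ with $\tr L=\sqrt k$ and $\Omega+L$ separable, giving $R^s+1\le 1+\sqrt k$. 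The intended construction is $L$ proportional to $\Omega_{\sigma}$-like corrections built from $|\rho_{ij}|$, i.e.\ mixtures of product phase states $\ket{\psi_\theta}\otimes\ket{\bar\psi_\theta}$. Submultiplicativity of $1+R^s$ then yields $E^{\text{exact}}_{c,\sepp}\le\log(1+\sqrt k)$, and similarly for PPT via the APE exact-cost formula $\log\|\cdot^\Gamma\|_1$, using additivity on tensor powers because $|\Omega^\Gamma|$ is multiplicative.

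For the lower bound, I use $E_{c,\sepp}\ge E_\tau$, the tempered negativity of Lami--Regula, which is a lower bound on NE cost. Recall $E_\tau(\rho)=\log\sup\{\tr X\rho:\|X^\Gamma\|_\infty\le1,\ \|X\|_\infty=\tr X\rho\}$. I take $X$ to be the sign-type witness $X=\sum_{ij}\mathrm{sgn}$-phase$(\rho_{ij})\ket{ii}\!\bra{jj}$ (i.e.\ $\rho_{ij}/|\rho_{ij}|$). Its partial transpose is a direct sum of permutation-like blocks with operator norm $1$, and $\tr X\Omega=\sum|\rho_{ij}|=1+\sqrt k$. The remaining check, $\|X\|_\infty=1+\sqrt k$, reduces to showing that the $2k\times2k$ matrix of phases has top eigenvalue $1+\sqrt k$ with eigenvector related to $\rho_V$. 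This is where the Fourier structure enters: the phase matrix is $\lsmatrix J&\sqrt k F_k\\ \sqrt kF_k^\dagger&J\rsmatrix$ with $J$ the all-ones matrix, up to phase normalisation, and I expect its norm to be $1+\sqrt k$. Verifying this norm condition is the main obstacle. I would first try to diagonalise it using $F_kJF_k^\dagger=k\ketbra{0}$. The chain $E_\tau\le E_{c,\sepp}\le E^{\text{exact}}_{c,\sepp}\le\log(1+\sqrt k)$ then closes, and the PPT equalities in \eqref{eqn10} follow.
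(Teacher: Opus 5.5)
Your overall skeleton matches the paper's: the dephasing bound on $E_R$, Christandl--Winter for $E_{c,\locc}$ and $\Esq$, a robustness upper bound built from two-qubit separable corrections, and a tempered-negativity lower bound. However, two concrete steps fail as written.

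\textbf{The distillation protocol does not output an ebit.} Write $\Omega_{F_k}=\sum_{a,b}(\rho_{F_k})_{ab}\ket{aa}\!\bra{bb}$ with $a=(a_1,a_2)$.
\begin{itemize}
\item Alice's computational-basis outcome $j$ on $A_2$ forces $a_2=b_2=j$ in both ket and bra. So $B_2$ is left in the product state $\ket{j}$.
\item The only surviving coherence between the $\ket{00}$ and $\ket{11}$ branches of $A_1B_1$ is the single entry $(F_k)_{jj}$, of modulus $1/\sqrt k$.
\item The conditional state is therefore $\frac12\lsmatrix 1&(F_k)_{jj}\\ \overline{(F_k)_{jj}}&1\rsmatrix$ on $\operatorname{span}\{\ket{00},\ket{11}\}$. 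It has rank two and fidelity $\frac12(1+1/\sqrt k)$ with $\Phi^+$ after phase correction.
\item Measuring the now-decoupled register $B_2$ in any basis cannot change this.
\end{itemize}
The fact that every entry of $F_k$ has modulus $1/\sqrt k$ is exactly why this fails. Alice must instead measure $A_2$ in the Fourier basis (outcome $x$), which preserves the whole block $F_k$. Bob then removes the phases $\omega_k^{x a_2}$ and applies $\ketbra{0}\otimes\id+\ketbra{1}\otimes F_k$ on $B_1B_2$. This maps each $\frac{1}{\sqrt2}\big(\ket{00}\ket{m}+\ket{11}F_k^\dagger\ket{m}\big)$ to $\ket{\Phi^+}\ket{m}$, and Bob discards $B_2$; this is the Appendix~E protocol. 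Alternatively, as the paper does in this proof, use the hashing bound: $E_{d,\locc}\ge S(B)-S(AB)=\log 2k-\log k$.

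\textbf{The tempered-negativity witness is inadmissible as stated.} The diagonal blocks of $\rho_{F_k}$ are $\frac{1}{2k}\id$, so their off-diagonal entries are zero and carry no phase. Filling them with $1$ (your $J$) breaks the condition $\|X\|_\infty=\tr X\Omega_{F_k}$. Take the uniform unit vector $w=(2k)^{-1/2}(1,\dots,1)$. Using $\sum_{p,q}(F_k)_{pq}=\sqrt k$, you get $\langle w|S|w\rangle=\frac{1}{2k}(2k^2+2k)=k+1$. Hence $\|X\|_\infty\ge k+1>1+\sqrt k=\tr X\Omega_{F_k}$ for all $k\ge2$, and the diagonalisation you planned would only confirm this.

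Put $0$ on those entries instead:
\[
S=\lsmatrix \id&\sqrt kF_k\\ \sqrt kF_k^\dagger&\id\rsmatrix=(1-\sqrt k)\id+2k\sqrt k\,\rho_{F_k}.
\]
Its spectrum is $\{1\pm\sqrt k\}$, because $k\rho_{F_k}$ is a rank-$k$ projector. So $\|X\|_\infty=1+\sqrt k=\tr X\Omega_{F_k}$, and $\|X^\Gamma\|_\infty=1$ still holds. This $X$ is exactly the paper's ansatz $X=a\mathbb{P}+b\,\Omega_\rho$ with $a=1-\sqrt k$, $b=2k\sqrt k$ (Proposition~\ref{prop1}, Corollary~\ref{cor3}); no diagonalisation via $F_kJF_k^\dagger$ is needed.

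\textbf{Two smaller points.}
\begin{itemize}
\item Submultiplicativity of $1+R^s$ does not follow from tensoring single-copy decompositions. Writing $\rho=(1+t)s-t\ell$ and multiplying only gives $1+2R^s(\rho\otimes\sigma)\le(1+2R^s(\rho))(1+2R^s(\sigma))$. That regularises to the weaker bound $\log(1+2\sqrt k)$.
\item Instead, apply the single-copy construction directly to $\rho_{F_k}^{\otimes n}$, using $\Omega_\rho^{\otimes n}\cong\Omega_{\rho^{\otimes n}}$ and $\|\rho^{\otimes n}\|_{\ell_1}=\|\rho\|_{\ell_1}^n$, as in Lemma~\ref{l1} and Corollary~\ref{cor2}. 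Your phase-averaged product states do reproduce the paper's separable blocks $\omega^\pm_{ij}$, so that construction is fine once written out.
\item The Audenaert--Plenio--Eisert formula requires zero binegativity, which you did not check. It holds because $|\Omega_\rho^\Gamma|=\sum_i\rho_{ii}\ketbra{ii}+\sum_{i\ne j}|\rho_{ij}|\ketbra{ij}$ is diagonal in a product basis.
\end{itemize}
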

where $X^\Gamma$ denotes the partial transpose $\Gamma(X)$ of an operator $X$ and the trace norm is given by $\| \cdot \|_1 = \tr | \cdot |$,   where $|X| \coloneqq \sqrt{X^\dagger X}$ defines its absolute value. \eqref{eqn10} follows from the fact that $|\Omega_{F_k}^\Gamma|$ is diagonal in a product basis, hence  $|\Omega_{F_k}^\Gamma|^\Gamma \ge 0$. Therefore, $\Omega_{F_k}$ has `zero binegativity'
, entailing that the logarithmic negativity~\cite{vidal2002computable,plenio2005logarithmic,Satoshi,werner02} coincides with the exact entanglement cost under PPT operations~\cite{audenaert2003entanglement,wang2020cost}. 

Interestingly, several important implications follow directly from this theorem:
\begin{enumerate}[label=\textbf{Implication \arabic*}, wide=0pt]
\item \label{item:a}: The flower states form a family of states that is strongly irreversible under all classes of operations that are a subset of NE, in the sense that for any such class the distillable entanglement is limited by a constant but the corresponding entanglement cost is unbounded.

\item \label{item:b}: Enlarging the set of operations from LOCC to NE can grant a true advantage in asymptotic entanglement dilution~\footnote{Brand\~ao and Plenio's results imply that for all bound entangled states there is always an advantage in distillation, as all entangled states would become NE-distillable.}, as~\eqref{eq:27} directly shows.

\item\label{item:c}: In general, \emph{the squashed entanglement 
is not monotonic under non-entangling operations; if it were, then we would have $\Esq \leq E_{c,\sepp}$, which is violated on flower states by~\eqref{eq:27}.} This means that it cannot be used to establish the fundamental irreversibility of entanglement theory advocated in~\cite{Lami2023}.
\end{enumerate}
\ref{item:a}  reveals a deeper physical consequence. While the qualitative breakdown of a second law for entanglement is well established~\cite{Lami2023}, our results advance this understanding by providing a strictly quantitative characterisation of this irreversibility. For this, we evaluate the asymptotic normalised non-entangling irreversibility gap, defined as:$$\Delta_{\mathrm{irr}}^{\sepp} \coloneqq \limsup_{d\to\infty} \frac{1}{\log d} \sup_{\rho\in\mathcal{D}(\mathbb{C}^d \otimes \mathbb{C}^d)} \left[ E_{c,\sepp}(\rho) - E_{d,\sepp}(\rho) \right] .$$ 
 which captures the worst-case scenario for entanglement manipulation, representing the maximum  fraction of entanglement that is fundamentally destroyed in the macroscopic limit.
 From~\cite{Lami2023}, it was established that $\Delta_{\mathrm{irr}}^{\sepp} \ge \frac{1}{\log 3} \left( 1 - \log \frac{3}{2} \right) \approx 0.262$.

In this context, Theorem~\ref{theo2} gives us an improved estimate for the lower bound, which we restate as follows.

\begin{corollary}\label{cor1}
The asymptotic irreversibility gap under non-entangling operations satisfies:
\begin{equation}
\Delta_{\mathrm{irr}}^{\sepp} \ge \frac{1}{2} . 
\end{equation}
\end{corollary}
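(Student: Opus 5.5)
The plan is to plug the flower states into the definition of $\Delta_{\mathrm{irr}}^{\sepp}$ and read off the bound from Theorem~\ref{theo2}. The flower state $\Omega_{F_k}$ lives on $\mathbb{C}^{d}\otimes\mathbb{C}^{d}$ with $d=2k$. Theorem~\ref{theo2} gives, for every $k\ge 2$,
\begin{equation*}
E_{c,\sepp}(\Omega_{F_k}) - E_{d,\sepp}(\Omega_{F_k}) = \log\big(1+\sqrt{k}\big) - \log 2 .
\end{equation*}

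For local dimension $d=2k$, the inner supremum in the definition of $\Delta_{\mathrm{irr}}^{\sepp}$ is therefore at least $\log\big(1+\sqrt{d/2}\big)-\log 2$. Restricting the $\limsup$ over all $d$ to the subsequence of even $d$ can only decrease it, so
\begin{equation*}
\Delta_{\mathrm{irr}}^{\sepp} \ge \limsup_{k\to\infty} \frac{\log\big(1+\sqrt{k}\big) - \log 2}{\log (2k)} .
\end{equation*}

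Since $\log(1+\sqrt{k}) \ge \tfrac12 \log k$, the numerator is at least $\tfrac12\log k - \log 2$. The denominator is $\log 2 + \log k$. The ratio therefore tends to $\tfrac12$ as $k\to\infty$, which yields $\Delta_{\mathrm{irr}}^{\sepp}\ge \tfrac12$.

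There is no real obstacle here, because all the substance lies in Theorem~\ref{theo2}: the value $E_{d,\sepp}=\log 2$ and, crucially, the lower bound $E_{c,\sepp}\ge\log(1+\sqrt{k})$ obtained via the tempered negativity. The only care needed is that the supremum in the definition runs over all states of a fixed local dimension $d$, so the even-$d$ subsequence is a legitimate witness for the $\limsup$, and that the logarithm base cancels in the ratio.
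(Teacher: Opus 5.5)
Your argument is correct and is essentially the paper's: read off the NE gap $\log(1+\sqrt{k})-\log 2$ of $\Omega_{F_k}$ from Theorem~\ref{theo2}, divide by $\log(2k)$, and let $k\to\infty$. Yours is in fact slightly more careful. The paper's display writes the gap as $\frac12\log k$ rather than $\log\frac{1+\sqrt{k}}{2}$, and it asserts the bound for each fixed $k$ without the even-$d$ subsequence (or tensor-power) justification you give; the limit $\frac12$ is the same either way.
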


\begin{proof}
For each fixed $k$, we have that
\begin{align}
\Delta_{\mathrm{irr}}^{\sepp}
&\ge
\frac{1}{\log(2k)}
\left(
E_{c,\sepp}(\Omega_{F_k})
-
E_{d,\sepp}(\Omega_{F_k})
\right)
\nonumber\\
&=
\frac{\log k}{2(\log 2 + \log k)} .
\label{eq:35}
\end{align}
We can then take the limit $k \to \infty$, which yields the claim and establishes that the largest known `irreversibility gap' is in fact $\Theta \left( \frac{1}{2}\log d \right)$.
\end{proof}

\section{Zero-Error Entanglement Manipulation Under LOCC}

Having explicitly explored the exact and standard entanglement manipulation under non-entangling operations, let us shift our attention to the  zero-error LOCC framework.

Under LOCC, the exact  cost is governed by the Schmidt number of the state~\cite{nielsen99}, which extends the definition of the Schmidt rank from pure states to mixed states via the convex-roof construction:
\begin{equation}
r_S(\rho) = \inf_{\{p_i, |\psi_i\rangle\}} \max_i r_S(\psi_i),
\end{equation}
where the infimum is taken over all pure-state decompositions satisfying $\rho = \sum_i p_i |\psi_i\rangle\langle\psi_i|$. Consequently, the single-copy, zero-error LOCC entanglement cost is given by:
\begin{equation}
E_{c,\locc}^{(0)}(\rho) = \log_2 r_S(\rho).
\end{equation}

However, the single-copy Schmidt number is submultiplicative~\cite{horodecki00,Yue2019}. Hence, evaluating the exact LOCC entanglement cost necessitates an asymptotic regularisation, defined as:
\bb
E_{c,\locc}^{\text{exact}}(\rho) &= \lim_{n\to\infty} \frac{1}{n} E_{c,\locc}^{(0)}(\rho^{\otimes n}) \\
&= \lim_{n\to\infty} \frac{1}{n} \log_2 \left[ r_S(\rho^{\otimes n}) \right]
\ee

Hence, it comes as no surprise that evaluating the regularised Schmidt number, and  therefore the exact LOCC cost,  remains elusive for most non-trivial states since it boils down to an intractable optimisation problem. Remarkably, we show that the flower states admit an exact analytical solution. We compute this quantity by leveraging uncertainty relations over finite abelian groups, whose full derivation we provide in  \hyperref[appendixD]{Appendix D}.

\begin{theorem}\label{theo3}
Under $\locc$, the regularised zero-error entanglement cost of the flower states is given by:
\begin{equation}
E_{c,\locc}^{\text{exact}}(\Omega_{F_k}) = \log_2 \left[ \min_{r|k} \left( r + \frac{k}{r} \right) \right].
\end{equation}
\end{theorem}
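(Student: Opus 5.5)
The plan is to reduce the statement to an additive‑combinatorial problem on $\mathbb{Z}_k^n$. I would prove a matching upper and lower bound for the single‑copy Schmidt number of $\Omega_{F_k}^{\otimes n}$, namely $r_S(\Omega_{F_k}^{\otimes n}) = m^n$ with $m \coloneqq \min_{r|k}(r+k/r)$. Taking $\frac1n\log_2$ then gives the claim, and it also shows additivity.

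\textbf{Step 1 (reduction).} Every pure state in the range of a maximally correlated state $\Omega_\rho$ has the form $\sum_i c_i\ket{ii}$, with Schmidt rank $|\mathrm{supp}\,c|$. Pure decompositions of $\Omega_\rho$ are in one‑to‑one correspondence with pure decompositions of $\rho$. Hence $r_S(\Omega_\rho)=\min_{\{p_j,c_j\}}\max_j|\mathrm{supp}\,c_j|$ over decompositions $\rho=\sum_j p_j\ketbra{c_j}$.

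Moreover $\Omega_{F_k}^{\otimes n}$ is, after reordering, the maximally correlated state $\Omega_{\rho_{F_k}^{\otimes n}}$. Now $\rho_{F_k}=\frac1k P$, where $P$ projects onto $\{(x,F_k^\dagger x)/\sqrt2 : x\in\mathbb{C}^k\}$. So vectors in the range of $\rho_{F_k}^{\otimes n}$ are indexed by $x\in\mathbb{C}^{k^n}$. Their components are labelled by $s\in\{0,1\}^n$, the $s$‑component being $F_s x$, with $F_s$ applying $F_k^\dagger$ on the tensor factors where $s_i=1$. The support size is therefore
\[
S_n(x)\coloneqq\sum_{s\in\{0,1\}^n}|\mathrm{supp}\,F_s x| .
\]
It thus suffices to show $S_n(x)\ge m^n$ for every $x\neq0$, and to exhibit a decomposition using only vectors with $S_n=m^n$.

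\textbf{Step 2 (achievability).} Fix a divisor $r|k$ attaining $m$ and let $H\le\mathbb{Z}_k$ be the subgroup of order $r$. The $k$ vectors $x_{a,\chi}=r^{-1/2}\,\mathbb{1}_{a+H}\,\chi$ form an orthonormal basis of $\mathbb{C}^k$; here $a$ runs over cosets and $\chi$ over $r$ characters that are distinct on $H$. Each has support $r$, and its Fourier transform is a modulated indicator of a coset of $H^\perp$, of support $k/r$. Since $V\mapsto(x,F^\dagger x)/\sqrt2$ is an isometry, the images of this basis form an orthonormal basis of $\mathrm{ran}\,P$. Hence $\rho_{F_k}$ is a uniform mixture of vectors with support $r+k/r=m$. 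Tensoring gives $r_S\le m^n$.

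\textbf{Step 3 (converse).} For $n=1$, the required bound $|\mathrm{supp}\,x|+|\mathrm{supp}\,\hat x|\ge m$ is the cyclic‑group uncertainty principle: Meshulam's bound in general, reducing to Tao's $k+1$ for prime $k$. Using $F^\dagger$ instead of $F$ only reflects the support.

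For the induction, write $x$ as a function of $(a,w)\in\mathbb{Z}_k\times\mathbb{Z}_k^{n-1}$ and split $s=(s_1,s')$. Fix $s'$, and put $y=(\id\otimes F_{s'})x$ and $y_w=y(\cdot,w)$. Because the operator on the first factor commutes with $F_{s'}$, the $n=1$ bound applied fibrewise gives
\[
\sum_{s_1}|\mathrm{supp}\,F_{(s_1,s')}x| = \sum_w\big(|\mathrm{supp}\,y_w|+|\mathrm{supp}\,F^\dagger y_w|\big)\ge m\,\#\{w:y_w\ne0\}.
\]
Next pick $a$ with $x_a\coloneqq x(a,\cdot)\ne0$. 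Since $y_w(a)=(F_{s'}x_a)(w)$, we get $\#\{w:y_w\neq0\}\ge|\mathrm{supp}\,F_{s'}x_a|$. Summing over $s'$ then yields $S_n(x)\ge m\,S_{n-1}(x_a)\ge m^n$ by induction.

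The main obstacle is the $n=1$ input: correctly invoking Meshulam's theorem to obtain precisely $\min_{r|k}(r+k/r)$ for the \emph{sum} of supports. I would derive this from his piecewise‑linear bound between consecutive divisors by checking that the sum is minimised at divisor endpoints. The inductive tensorisation in Step 3 is the conceptual key, but it is elementary once the fibrewise trick is set up.
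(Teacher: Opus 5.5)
Your proposal is correct and follows essentially the paper's route: reduce to minimising $\sum_{s}|\mathrm{supp}\,F_s x|$ over the range of $\rho_{F_k}^{\otimes n}$, settle $n=1$ with Meshulam's bound checked at divisor endpoints, tensorise by a slicing induction, and attain $m^n$ with tensor products of subgroup-indicator states. The differences are cosmetic. You decompose $\rho_{F_k}$ via an explicit orthonormal basis of modulated coset indicators rather than the paper's Heisenberg--Weyl twirl, and your induction applies the one-qudit bound fibrewise before invoking the hypothesis on a single slice $x_a$, which mirrors the paper's order (hypothesis on every slice $T^{(m_1)}$, then the one-qudit bound on one column).
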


The physical and structural significance of Theorem~\ref{theo3} is remarkable, particularly when contrasted with the NE framework. It reveals that the zero-error LOCC cost is deeply tied to the number-theoretic structure of the local dimension $k$, unlike in the NE framework—where we demonstrate that the exact cost is entirely immune to the strict zero-error constraint. For example, when $k$ is a prime number, the zero-error entanglement cost reduces to $\log(k+1)$, representing an approximate twofold increase over its standard, vanishing-error LOCC counterpart. This substantial difference demonstrates that under the constraint of LOCC, preparing states with strict zero-error requires a significant amount of additional entanglement.

While the high zero-error entanglement cost shows that preparing these states is highly resource-intensive, extracting entanglement from them tells a vastly different story.


\begin{theorem}\label{theo4} The single ebit of entanglement distillable from generalised flower states under $\mathrm{NE}$ or $\mathrm{PPT}$ operations can be achieved deterministically via a one-shot, one-way $\mathrm{LOCC}$ protocol.\end{theorem}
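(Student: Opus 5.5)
The plan has two parts: an upper bound showing that NE or PPT operations cannot distil more than one ebit from $\Omega_V$, and an explicit one-shot, one-way LOCC protocol that turns a single copy of $\Omega_V$ into a perfect ebit with probability one.

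\emph{Upper bound.} I would bound both distillable rates by the relative entropy of entanglement. For NE this is the standard Brand\~ao--Plenio bound, and for PPT it is the Rains bound, which is at most $E_R$. For a maximally correlated state one has $E_R(\Omega_\rho)=S(\Delta(\rho))-S(\rho)$, where $\Delta$ is the dephasing map. Writing $\rho_V=\frac{1}{2k}\big(\id+\lsmatrix 0 & V\\ V^\dagger & 0\rsmatrix\big)$, the off-diagonal block is unitary and Hermitian, so its eigenvalues are $\pm1$. Hence $\rho_V$ has $k$ eigenvalues equal to $1/k$ and $k$ equal to zero, giving $S(\rho_V)=\log k$, while $S(\Delta(\rho_V))=\log 2k$. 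Therefore $E_R(\Omega_V)=\log 2$ for every unitary $V$, and it suffices to distil exactly one ebit.

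\emph{Decomposition of the state.} Set $|v_m\rangle=\frac{1}{\sqrt2}\big(|0\rangle|m\rangle+|1\rangle V^\dagger|m\rangle\big)$ for $m=1,\dots,k$. A direct block computation gives $\rho_V=\frac1k\sum_m |v_m\rangle\langle v_m|$: the diagonal blocks are $\id$, and the off-diagonal block is $\sum_m|m\rangle\langle m|V=V$. So $\Omega_V=\sum_{x,y}(\rho_V)_{xy}|xx\rangle\langle yy|$, with $x=(i,j)$ ranging over $\{0,1\}\times\{1,\dots,k\}$.

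\emph{Protocol.}
\begin{itemize}
\item Alice measures $A_2$ in the Fourier basis $\{|\varphi_a\rangle\}$. Every overlap has modulus $|\langle\varphi_a|j\rangle|=1/\sqrt k$, so each outcome occurs with probability $1/k$.
\item Alice sends $a$ to Bob. Bob applies the diagonal unitary on $B_2$ that cancels the phases $\langle\varphi_a|j\rangle$.
\item After these two steps, the post-measurement state is the same for every outcome $a$:
\begin{equation*}
\sum_{x,y}(\rho_V)_{xy}\,|i_x\rangle\langle i_y|_{A_1}\otimes|x\rangle\langle y|_B=(\id\otimes W)\,\rho_V\,(\id\otimes W)^\dagger ,
\end{equation*}
where $W:|i,j\rangle\mapsto|i\rangle_{A_1}|i,j\rangle_B$ is an isometry.
\item By the decomposition above, this state is the uniform mixture of the vectors $|w_m\rangle=\frac{1}{\sqrt2}\big(|0\rangle_{A_1}|0,m\rangle_B+|1\rangle_{A_1}|1\rangle V^\dagger|m\rangle_B\big)$.
\end{itemize}

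The key observation concerns the subspaces $S_m=\mathrm{span}\{|0,m\rangle,\ |1\rangle V^\dagger|m\rangle\}$ on $B$. Each $S_m$ is two-dimensional, and the $S_m$ are mutually orthogonal. Indeed, the vectors $|0,m\rangle$ are orthonormal in $m$, the vectors $V^\dagger|m\rangle$ are orthonormal because $V$ is unitary, and the $|0\rangle$ and $|1\rangle$ sectors of $B_1$ are orthogonal. Together they span all of $B$.

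Bob therefore performs the projective measurement $\{P_{S_m}\}_m$. The projector $P_{S_m}$ annihilates every $|w_{m'}\rangle$ with $m'\neq m$ and leaves $|w_m\rangle$ invariant. So Bob learns $m$ and holds the pure state $|w_m\rangle$, which is maximally entangled between $A_1$ and the qubit $S_m$. A local isometry $S_m\to\mathbb{C}^2$ then outputs a perfect ebit.

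No communication from Bob to Alice is needed, and every measurement branch succeeds. The protocol is therefore one-shot, one-way and deterministic, and combined with the upper bound it proves the theorem.

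The step I expect to need the most care is the phase-correction step. One must check that Alice's Fourier measurement leaves the coherences between the $i=0$ and $i=1$ sectors intact, and that Bob's correction depends only on $j$ and on $a$. Both follow because the measurement acts on $A_2$ alone and the state is supported on $j_A=j_B$.
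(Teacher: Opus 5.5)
Your proposal is correct and follows essentially the same route as the paper: Alice measures $A_2$ in the Fourier basis and sends the outcome, Bob cancels the resulting phases, and Bob then removes $B_2$. The paper does this last step with a controlled-$V$ followed by a Fourier measurement of $B_2$ whose outcome $y$ contributes only a global phase $\omega_k^{my}$, which is equivalent to your projection onto the orthogonal subspaces $S_m$ followed by an $m$-dependent isometry; your upper bound $E_R(\Omega_V)=S(\Delta(\rho_V))-S(\rho_V)=\log 2$ for arbitrary unitary $V$ matches, and slightly generalises, the computation the paper carries out in Appendix C for $V=F_k$.
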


We prove this  in \hyperref[appendixF]{Appendix E}. The above theorem carries 
significant physical implications. Typically, achieving the theoretical limit of entanglement manipulation under a class of operations  requires an asymptotic number of copies. Theorem~\ref{theo4} demonstrates that for the distillation of the generalised flower states using NE or PPT operations, this asymptotic regime is completely unnecessary as there exists a one-shot, one-way LOCC protocol which is just as powerful as NE and PPT.

It is noteworthy to point out that flower states with prime $k$ exhibit the asymptotically largest gap between exact LOCC dilution and distillation: the cost is $\log (1+d/2) \approx \log d$, while the distillable entanglement is $1$.  So $\Delta_{\mathrm{irr}}^{\text{exact, LOCC}} = 1$. On the contrary, when $k$ is a perfect square, the exact LOCC cost coincides with its standard counterpart, and both equal $\log\big(2\sqrt{k}\big)$.

\section{Discussions} In this work, we 
have explored the entanglement landscape by leveraging the unique structure of flower states. Contrary to the intuition that non-entangling (NE) operations should arbitrarily enhance dilution and distillation rates, we have demonstrated that this advantage is strictly bounded: the NE entanglement cost is at most one ebit less than its LOCC counterpart. Furthermore, the distillable entanglement is strictly limited to a single ebit. Remarkably, this limit is achievable deterministically via a one-shot, one-way LOCC protocol, rendering both asymptotic resource manipulation and the broader class of NE or PPT operations entirely redundant for this task.

By analysing the entanglement properties of these states, we have established an improved asymptotic irreversibility gap under NE operations. We have also demonstrated that the squashed entanglement fails to remain a monotone under this broader class, indicating that established LOCC entanglement monotones need not inherently retain their operational validity when the set of free operations is relaxed to NE.

Investigating the strict regime of exact, zero-error LOCC transformations, we have derived the exact analytical value of the coherence number for $\rho_{F_k}$ (and by extension, the Schmidt number of $\Omega_{F_k}$), which shows that flower states with prime $k$ exhibit the asymptotically largest gap between exact LOCC dilution and distillation: the cost is $\log (1+d/2) \approx \log d$, while the distillable entanglement is $1$.

These findings open several compelling avenues for future research. First, while we have derived the lower bound $\Delta_{\mathrm{irr}}^{\sepp} \ge \frac{1}{2}$, it remains an open question whether specific quantum states can saturate the theoretical maximum of $\Delta_{\mathrm{irr}}^{\sepp} = 1$. Identifying such states would reveal the ultimate ``resource traps'' in entanglement theory. Second, determining which other established measures fail to preserve monotonicity under NE operations will systematically map the operational boundary between strictly physical local operations and their broader axiomatic relaxations.

Finally, the fact that generalised flower states can be distilled to their asymptotic threshold using only a simple, one-shot deterministic one-way LOCC protocol is of independent interest. Identifying other classes of states that share this property could reveal a deeper structural reason for why the theoretical power of NE operations is sometimes practically redundant.

\section{acknowledgements} We would like to thank Bartosz Regula for several helpful discussions. We acknowledge financial support from the European Union (ERC StG ETQO, Grant Agreement no.\ 101165230). Views and opinions expressed are however those of the author(s) only and do not necessarily reflect those of the European Union or the European Research Council. Neither the European Union nor the granting authority can be held responsible for them. 

OpenAI's ChatGPT~5.5 was employed to assist us in searching the literature on uncertainty relations in cyclic groups, and in completing the proof of Theorem~\ref{theo3}.

\appendix
\bibliographystyle{apsc}
\bibliography{bibliography}
\clearpage
    \pagebreak

\resumetoc

    \setcounter{equation}{0}
    \setcounter{figure}{0}
    \setcounter{table}{0}
    \setcounter{section}{0}

    \renewcommand{\theequation}{S\arabic{equation}}
    \renewcommand{\thefigure}{S\arabic{figure}}
    \renewcommand{\thetable}{S\arabic{table}}
    \renewcommand{\thesection}{S\roman{section}}

    \onecolumngrid 
    \begin{center}
        \textbf{\large Supplemental Material: \\[.2em]
        Entanglement of flower states}\\[.8em]
        Samrat Sen$^1$ and  Ludovico Lami$^1$\\[.2em]
        \small{\textit{$^1$Scuola Normale Superiore, Piazza dei Cavalieri 7, 56126 Pisa, Italy}}\\
      
    \end{center}
    \addtocontents{toc}{\protect\setcounter{tocdepth}{2}}
    \vspace{10pt}

    \tableofcontents
    \vspace{20pt}

    \section{Prior Work}

Before moving on to the main proofs of the paper, one thing needs to be understood very well: the shift towards an axiomatic approach in entanglement theory is not just a convenient way to bypass the mathematical complexities of LOCC. It is actually driven by a deep connection to thermodynamics. Classical thermodynamics is itself an axiomatic theory, and it shares many structural similarities with entanglement. However, a major difference arises when it comes to reversibility. Under LOCC, pure states are perfectly reversible, mirroring the elegance of ideal thermodynamic processes. In contrast, mixed entangled states are stubbornly irreversible: the entanglement cost required to create them strictly exceeds the amount one can distil from them ($E_c > E_d$). The desire to restore this reversibility—and to find a unique ``second law'' governing all entanglement transformations~\cite{Popescu1997Thermodynamics,Vedral1998Entanglement,Vidal2000Entanglement,Horodecki2002Thermodynamical}—naturally inspired the field to adopt an axiomatic framework, building the theory directly from the core physical principles of entanglement manipulation.

Asymptotically non-entangling (ANE) operations appeared to restore reversibility by identifying the regularised relative entropy of entanglement ($E_R^\infty$) as the universal quantifier~\cite{Brando2008,Brando2010,Brando_stein}:

\bb
E_d(\rho_{AB}) &= E_c(\rho_{AB}) \\
&= E_R^\infty(\rho_{AB}) \\
&= \lim_{n \to \infty} \frac{1}{n} \min_{\sigma_{A_n B_n} \in ~\sep} D(\rho_{AB}^{\otimes n} \| \sigma_{A_n B_n})
\ee

Although a technical gap regarding the generalised quantum Stein's lemma~\cite{berta} temporarily cast doubt on the reversible theory, a recent proof by one of the authors has re-established its validity under ANE~\cite{lami_Stein}. Nevertheless, the thermodynamic parallel is definitively shattered by the presence of irreversibility under NE, the largest resource non-generating class that preserves separability~\cite{Lami2023}.

Let us recall that the two-qutrit state $\omega_3$ introduced in~\cite{Lami2023}, which is a counterexample to the existence of  a `second law of entanglement theory',   satisfies $E_{c,\sepp}(\omega_3) = 1$ and $E_{d,\sepp}(\omega_3) = \log(3/2)$.
 Hence, for the  state $\rho = \omega_3^{\otimes n}$, the total local dimension scales as $d = 3^n$. As any regularised quantity, such as $E_c$ and $E_d$ , is additive, we have $E(\omega_3^{\otimes n}) = n E(\omega_3)$ and thus 
\bb
\Delta_{\mathrm{irr}}^{\sepp} &\ge \lim_{n\to\infty} \frac{n \left[ E_{c,\sepp}(\omega_3) - E_{d,\sepp}(\omega_3) \right]}{n \log 3} \\
&= \frac{E_{c,\sepp}(\omega_3) - E_{d,\sepp}(\omega_3)}{\log 3} \\
&= \frac{1}{\log 3} \left( 1 - \log \frac{3}{2} \right) \\
&\approx 0.262 .
\ee establishing the previously known lower bound.\\

Since we shall be dealing with  standard as well as exact entanglement manipulation, it will be helpful to recall the  definitions of distillable entanglement and entanglement cost for the corresponding regimes. Following Vidal and Werner [\cite{vidal2002computable}, Eq. (43)], for an arbitrary $\varepsilon \in [0, 1)$, distillable entanglement and entanglement cost of a state $\rho_{AB}$ under $\cleancal{K}$-preserving operations, where $\cleancal{K}=\cleancal{S}, \cleancal{PPT}$ are given as:  
\bb
E^\varepsilon_{d,\mathrm{KP}}(\rho_{AB}):= \sup \left\{ R>0 \;\middle|\; \limsup_{n\to\infty} \inf_{\Lambda_n \in \mathrm{KP}\left(A^nB^n \to A_0^{\lceil Rn\rceil} B_0^{\lceil Rn\rceil}\right)} \frac12 \left\| \Lambda_n\left(\rho_{AB}^{\otimes n}\right) - \Phi_2^{\otimes \lceil Rn\rceil} \right\|_1 \le \varepsilon \right\} .
\ee

\bb
E^\varepsilon_{c,\mathrm{KP}}(\rho_{AB}):= \inf \left\{ R>0 \;\middle|\; \limsup_{n\to\infty} \inf_{\Lambda_n \in \mathrm{KP}\left(A_0^{\lfloor Rn\rfloor} B_0^{\lfloor Rn\rfloor} \to A^nB^n\right)} \frac12 \left\| \Lambda_n\left(\Phi_2^{\otimes \lfloor Rn\rfloor}\right) - \rho_{AB}^{\otimes n} \right\|_1 \le \varepsilon \right\} .
\ee
In contrast to the standard entanglement cost, which quantifies asymptotic dilution by allowing an error that vanishes in the many-copy limit, the zero-error regime governs exact transformations. The corresponding  entanglement measures
read:

\bb
E^{\mathrm{exact}}_{d,\mathrm{KP}}(\rho_{AB}):= \sup \;\Biggl\{ R>0 \;\Bigm|\; \exists\, n_0 \ \text{s.t.}\ \forall n \ge n_0,\ \exists\, \Lambda_n \in \mathrm{KP}\left(A^nB^n \to A_0^{\lceil Rn\rceil}B_0^{\lceil Rn\rceil}\right), \Lambda_n(\rho_{AB}^{\otimes n}) = \Phi_2^{\otimes \lceil Rn\rceil} \Biggr\} ,
\ee

\bb
E^{\mathrm{exact}}_{c,\mathrm{KP}}(\rho_{AB})
:= \inf \;\Biggl\{ R>0 \;\Bigm|\; &\exists\, n_0 \ \text{s.t.}\ \forall n \ge n_0,\ \exists\, \Lambda_n \in \mathrm{KP}\left(A_0^{\lfloor Rn\rfloor}B_0^{\lfloor Rn\rfloor} \to A^nB^n\right), 
&\Lambda_n(\Phi_2^{\otimes \lfloor Rn\rfloor}) = \rho_{AB}^{\otimes n} \Biggr\} ,
\ee
Hence, it trivially follows that: 
\bb
E^{\mathrm{exact}}_{d,\mathrm{KP}}(\rho)
\le
E_{d,\mathrm{KP}}(\rho)
\le
E_{c,\mathrm{KP}}(\rho)
\le
E^{\mathrm{exact}}_{c,\mathrm{KP}}(\rho) .
\ee

\section{Appendix A: Standard robustness of entanglement and exact NE entanglement cost}\label{appendixA}
We shall now be concerned with the problem of estimating the standard robustness of entanglement~\cite{vidal1999,lami2021framework} as well as the closely related robustness of PPT-ness 
 of the flower states. 
We remind the reader that for $\cleancal{K} = \cleancal{S}, \cleancal{P}\cleancal{P}\cleancal{T}$ 

\bb
R^s_{\cleancal{K}}(\omega)\coloneqq \inf \{\tr L : L, \omega + L \in \cleancal{K}\} . \label{eq:15}
\ee

However, before that, we need to define the following operator: 

\bb
\Omega[S, T]\mathrel{\mathop:}= \sum_{i,j=1}^d S_{ij} |ii\rangle\langle jj| + \sum_{i \neq j} T_{ij} |ij\rangle\langle ij| . 
\ee
which acts on a bipartite system with Hilbert space $\mathbb{C}^d \otimes \mathbb{C}^d$, given two $d \times d$ matrices $S, T$, with $S$ Hermitian and $T$ real.

In what follows, we will often extend $T$ to have zero diagonal, which we write $\Delta(T) = 0$. These operators have been considered by Johnston and MacLean~\cite{johnston2021pairwise}, and in even greater generality by Singh and Nechita~\cite{singh2021diagonal}.

We will also use the shorthand notation
\begin{equation}
\Omega_S\mathrel{\mathop:}= \Omega[S, 0] . \label{eq:2}
\end{equation}
Hence, in the case where $S$ is a state, \textit{i.e.}\ a density matrix, it becomes  $\Omega_S$, the maximally correlated state associated with $S$. The following lemma is crucial in calculating the exact NE cost of the flower states as we shall see later:

\begin{Lemma}\label{l1}
Let $\rho$ be a $d \times d$ density matrix. Then the robustnesses of the associated maximally correlated state $\Omega_\rho$ satisfy that
\begin{equation}
\frac{\|\rho\|_{\ell_1} - 1}{2} \le R^s_{\cleancal{PPT}}(\Omega_\rho) \le R^s_{\cleancal{S}}(\Omega_\rho) \le \|\rho\|_{\ell_1} - 1  \label{eq:4}
,\end{equation} where $\|\cdot\|_{\ell_1}$ denotes the entry-wise $\ell_1$ norm (also referred to as the Hadamard $1$-norm or taxicab matrix norm), defined for an arbitrary $d \times d$ matrix $A$ as the absolute sum of all its entries:$$\|A\|_{\ell_1} \coloneqq \sum_{i=1}^d \sum_{j=1}^d |A_{ij}|$$
\end{Lemma}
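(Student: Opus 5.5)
Three inequalities need proof. The middle one, $R^s_{\cleancal{PPT}}(\Omega_\rho)\le R^s_{\cleancal{S}}(\Omega_\rho)$, is immediate from $\cleancal{S}\subseteq\cleancal{PPT}$: any feasible $L$ for the separable robustness is also feasible for the PPT one. The remaining two need an explicit construction and a dual witness.

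\emph{Upper bound via an explicit $L$.} I will look for $L$ in the family $\Omega[S,T]$ introduced above, so that $\Omega_\rho+L=\Omega[\rho+S,T]$. The natural choice is $S=|\rho|_{\mathrm{off}}-\mathrm{diag}$-type correction. Concretely, for $i\neq j$ write $\rho_{ij}=|\rho_{ij}|e^{i\theta_{ij}}$. I then take:
\begin{itemize}
\item $T_{ij}=|\rho_{ij}|$ for $i\ne j$;
\item $S$ diagonal with $S_{ii}=\sum_{j\neq i}|\rho_{ij}|$, minus the off-diagonal part of $\rho$, i.e.\ $S_{ij}=-\rho_{ij}$ for $i\neq j$.
\end{itemize}
This $L=\Omega[S,T]$ decomposes as a sum over pairs $i<j$ of operators supported on $\mathrm{span}\{|ii\rangle,|jj\rangle,|ij\rangle,|ji\rangle\}$. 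Each block has the form $|\rho_{ij}|\big(|ii\rangle\langle ii|+|jj\rangle\langle jj|-e^{i\theta}|ii\rangle\langle jj|-\mathrm{h.c.}+|ij\rangle\langle ij|+|ji\rangle\langle ji|\big)$, and I claim each such block is separable. It equals an average over a phase $\phi$ of $|\rho_{ij}|\cdot(|i\rangle-e^{-i(\theta+\phi)}\dots)$, i.e.\ an average of the product vectors $(|i\rangle+e^{i\alpha}|j\rangle)\otimes(|i\rangle-e^{i\beta}|j\rangle)$ with phases $\alpha,\beta$ constrained so that only the required coherence survives. This is the standard separability of $\tfrac12(\id-\text{swap-like})$ two-qubit operators, checkable by twirling.

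Similarly, $\Omega_\rho+L=\Omega[\rho+S,T]$. Its diagonal-block part $\rho+S$ is diagonal, and the same pairwise blocks now appear with $+e^{i\theta}$, so they are again separable. The trace is $\tr L=\sum_i S_{ii}+\sum_{i\ne j}T_{ij}=2\sum_{i\neq j}|\rho_{ij}|$. That is a factor $2$ too large, so I will refine by dropping the $S$ diagonal: take $S_{ii}=0$ and let the $\rho_{ii}$ and $T$ supply the diagonal weight. Rescaling, the pair blocks need only total weight $|\rho_{ij}|$ per ordered pair, and the target $\tr L=\sum_{i\ne j}|\rho_{ij}|=\|\rho\|_{\ell_1}-1$ is achieved by choosing $L$ as the convex combination of the pairwise separable operators with coherence $-\rho_{ij}$ and $|ii\rangle,|jj\rangle$ weights absorbed by $\rho$ itself only where possible. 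Balancing this bookkeeping, so that both $L$ and $\Omega_\rho+L$ are separable at the stated trace, is the main fiddly step.

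\emph{Lower bound via a witness.} I will use the dual formulation: for any Hermitian $W$ with $-\id\le W^{\Gamma}\le \id$, I want $R^s_{\cleancal{PPT}}(\omega)\ge \tfrac12(\tr W\omega-1)$ up to normalization. More precisely, $1+2R\ge \|\omega^\Gamma\|_1$-style bounds follow because $\omega=(1+R)\sigma_1-R\sigma_2$ with $\sigma_i$ PPT. Taking partial transposes gives $\|\omega^\Gamma\|_1\le 1+2R$, hence $R\ge(\|\Omega_\rho^\Gamma\|_1-1)/2$. Finally, $\Omega_\rho^\Gamma=\sum_i\rho_{ii}|ii\rangle\langle ii|+\sum_{i\ne j}\rho_{ij}|ij\rangle\langle ji|$ is block diagonal with $2\times2$ blocks $\begin{pmatrix}0&\rho_{ij}\\ \bar\rho_{ij}&0\end{pmatrix}$ of trace norm $2|\rho_{ij}|$. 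Summing the blocks yields $\|\Omega_\rho^\Gamma\|_1=\|\rho\|_{\ell_1}$, which gives the left inequality.
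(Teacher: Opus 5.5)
\textbf{Gap: the upper bound $R^s_{\cleancal{S}}(\Omega_\rho)\le\|\rho\|_{\ell_1}-1$ is not established.} Your middle inequality and your lower bound are fine. The decomposition $\Omega_\rho=(1+R)\sigma_1-R\sigma_2$ with $\sigma_{1,2}$ PPT gives $\|\Omega_\rho^\Gamma\|_1\le 1+2R$. Your block computation $\|\Omega_\rho^\Gamma\|_1=\sum_i\rho_{ii}+\sum_{i<j}2|\rho_{ij}|=\|\rho\|_{\ell_1}$ is also correct.

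The problem is the upper bound. The only construction you actually verify ($S_{ij}=-\rho_{ij}$, $S_{ii}=\sum_{j\neq i}|\rho_{ij}|$, $T_{ij}=|\rho_{ij}|$) cancels every coherence completely. So $\Omega_\rho+L$ is diagonal in a product basis; it does not carry blocks with $+e^{i\theta}$, as you claim. This only gives $R^s_{\cleancal{S}}(\Omega_\rho)\le 2(\|\rho\|_{\ell_1}-1)$. That is exactly the ``sloppier estimate'' in the paper's remark after the lemma: enough for Corollary~\ref{cor2}, but not for the lemma as stated.

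Your proposed repairs cannot close the factor $2$:
\begin{itemize}
\item Setting $S_{ii}=0$ while keeping $S_{ij}=-\rho_{ij}$ makes $L$ non-positive. Its principal submatrix on $|ii\rangle,|jj\rangle$ has zero diagonal and a nonzero off-diagonal entry, so $L\notin\cleancal{S}$.
\item More generally, take any separable operator supported on $\mathrm{span}\{|ii\rangle,|jj\rangle,|ij\rangle,|ji\rangle\}$ with $|ii\rangle\langle jj|$-entry $-\rho_{ij}$. Positivity and AM--GM force weight at least $2|\rho_{ij}|$ on $|ii\rangle,|jj\rangle$. Positivity of the partial transpose, whose $(ij,ji)$ entry is that same coherence, forces weight at least $2|\rho_{ij}|$ on $|ij\rangle,|ji\rangle$. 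So its trace is at least $4|\rho_{ij}|$.
\end{itemize}
As long as $L$ is made to cancel the full coherence, no ``bookkeeping'' removes the factor $2$.

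\textbf{The missing idea} is to cancel only \emph{half} of each coherence. Then $\Omega_\rho+L$ stays non-diagonal, but it is still separable, built from the same equal-weight blocks you already know are separable. For $i<j$ with $\rho_{ij}\neq0$ and $\varphi_{ij}=\rho_{ij}/|\rho_{ij}|$, set
\[
B^\pm_{ij}=\tfrac{|\rho_{ij}|}{2}\big(|ii\rangle\langle ii|+|jj\rangle\langle jj|+|ij\rangle\langle ij|+|ji\rangle\langle ji|\pm\varphi_{ij}|ii\rangle\langle jj|\pm\varphi_{ij}^*|jj\rangle\langle ii|\big).
\]
These are separable, either by your phase-averaging argument or because they are PPT two-qubit operators, and $\tr B^\pm_{ij}=2|\rho_{ij}|$. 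Take $L=\sum_{i<j}B^-_{ij}$. Then $\tr L=\sum_{i\neq j}|\rho_{ij}|=\|\rho\|_{\ell_1}-1$, and
\[
\Omega_\rho+L=\sum_i\rho_{ii}|ii\rangle\langle ii|+\sum_{i<j}B^+_{ij}.
\]
This holds because $\rho_{ij}-\rho_{ij}/2=\rho_{ij}/2$, and the diagonal weights added by $L$ are exactly those carried by the $B^+_{ij}$. Hence $\Omega_\rho+L$ is separable. This is the paper's construction, with $B^\pm_{ij}=2|\rho_{ij}|\,\omega^\pm_{ij}$. It is what your remark about blocks reappearing with $+e^{i\theta}$ was reaching for, though that remark does not hold for the $L$ you actually chose.
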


\begin{proof}
The lower bounds
\begin{equation}
R^s_{\cleancal{S}}(\Omega_\rho) \ge R^s_{\cleancal{PPT}}(\Omega_\rho) \ge \frac{\|\Omega_\rho^\Gamma\|_1 - 1}{2} = \frac{\|\rho\|_{\ell_1} - 1}{2} \label{eq:5}
\end{equation}
are known from general arguments. The problem is to prove the upper bounds.

Consider a pair of indices $i < j$ such that $\rho_{ij} \neq 0$. Setting $\varphi_{ij}\coloneqq \frac{\rho_{ij}}{|\rho_{ij}|}$, define the state
\bb
\omega_{ij}^\pm
&\coloneqq
\frac{1}{2}
\left( \frac{\varphi_{ij}|ii\rangle \pm |jj\rangle}{\sqrt{2}} \right)
\left( \frac{\varphi_{ij}^*\langle ii| \pm \langle jj|}{\sqrt{2}} \right)
+ \frac{1}{4} \big( |ij\rangle\langle ij| + |ji\rangle\langle ji| \big) \\
&=
\frac{1}{4|\rho_{ij}|}
\Omega\Big[
|\rho_{ij}| (|i\rangle\langle i| + |j\rangle\langle j|) \pm \rho_{ij}|i\rangle\langle j| \pm \rho_{ij}^*|j\rangle\langle i| , \,\,
|\rho_{ij}| (|i\rangle\langle j| + |j\rangle\langle i|)
\Big].
\ee

It is very easy to verify that $\omega_{ij}^\pm$ is a PPT state; since it lives on an effective two-qubit system with local Hilbert spaces $\operatorname{span}\{|i\rangle , |j\rangle\}$, it must be separable, too~\footnote{This is also easy to see directly, but let’s not get into that.}. Therefore, denoting with $|M|$ the entry-wise absolute value of a matrix $M$, let us write
\bb
\Omega_\rho + 2 \sum_{i<j:\rho_{ij}\neq0} |\rho_{ij}|\,\omega_{ij}^-
&= \Omega\Bigg[ \rho + \frac{1}{2} \sum_{i<j} \Big( |\rho_{ij}| (\ketbra{i} + \ketbra{j}) - \rho_{ij}\ketbraa{i}{j} - \rho_{ij}^*\ketbraa{j}{i} \Big) , \,\, \frac{1}{2} \sum_{i<j} |\rho_{ij}| (\ketbraa{i}{j} + \ketbraa{j}{i}) \Bigg] \\
&= \Omega\Bigg[ \frac{1}{2}\rho + \frac{1}{2} \sum_i \Big( \sum_j |\rho_{ij}| \Big) \ketbra{i} , \quad \frac{1}{2}\big(|\rho|-\Delta(\rho)\big) \Bigg] \\
&= \sum_i \rho_{ii} \ketbra{ii} + 2 \sum_{i<j: \rho_{ij} \neq 0} |\rho_{ij}| \omega^+_{ij} .
\ee
 Since the operator on the last line is a sum of separable states and hence it is clearly separable itself, we deduce that
\begin{equation}
R^s_{\cleancal{S}}(\Omega_\rho) \le 2 \sum_{i<j: \rho_{ij} \neq 0} |\rho_{ij}| = \|\rho\|_{\ell_1} - 1 , \label{eq:8}
\end{equation}

which establishes (\ref{eq:4}).
\end{proof}

\begin{rem}
A sloppier estimate that is anyway sufficient for our purposes rests on the observation that $\Omega_\rho + 4 \sum_{i<j: \rho_{ij} \neq 0} |\rho_{ij}| \omega_{ij}^-$ is manifestly separable because it is diagonal in a product basis. This yields $R^s_{\cleancal{S}}(\Omega_\rho) \le 2 (\|\rho\|_{\ell_1} - 1)$, which — despite being sub-optimal — would suffice to establish the forthcoming Corollary~\ref{cor2}.
\end{rem}

\begin{corollary}\label{cor2}
For every $d \times d$ density matrix $\rho$ and $\cleancal{K} = \cleancal{S}, \cleancal{PPT}$, the associated maximally correlated state $\Omega_\rho$ satisfies that
\begin{equation}
E_{c, \cleancal{K}}^{\text{exact}}(\Omega_\rho) = \log \|\rho\|_{\ell_1} . \label{eq:9}
\end{equation}
\end{corollary}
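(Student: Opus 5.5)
The plan is to sandwich the regularised exact cost between two bounds expressed through robustness, and then use Lemma~\ref{l1} on tensor powers. I would rely on the known characterisation of exact cost under $\cleancal{K}$-preserving operations (Brand\~ao--Datta~\cite{Brando2010,nilanjana2011}; and for PPT, Audenaert--Plenio--Eisert and Wang--Wilde~\cite{audenaert2003entanglement,wang2020cost}). In one-shot form it says that an $m$-ebit maximally entangled state can be converted exactly into $\omega$ by a $\cleancal{K}$-preserving map if and only if, up to integer rounding of $2^m$, $1+R^s_{\cleancal{K}}(\omega) \le 2^m$. The resulting formula is
\[
E^{\mathrm{exact}}_{c,\cleancal{K}}(\omega)=\lim_{n\to\infty}\frac1n\log\big(1+R^s_{\cleancal{K}}(\omega^{\otimes n})\big),
\]
so the task reduces to computing this regularisation for $\omega=\Omega_\rho$.

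\emph{Step 1 (tensor powers stay maximally correlated).} Regrouping systems gives $\Omega_\rho^{\otimes n}=\Omega_{\rho^{\otimes n}}$, a maximally correlated state on $\mathbb{C}^{d^n}\otimes\mathbb{C}^{d^n}$. The entrywise norm is multiplicative: $\|\rho^{\otimes n}\|_{\ell_1}=\|\rho\|_{\ell_1}^n$. Hence Lemma~\ref{l1} applies directly to every $n$ and yields
\[
\frac{\|\rho\|_{\ell_1}^n+1}{2}\le 1+R^s_{\cleancal{PPT}}(\Omega_\rho^{\otimes n})\le 1+R^s_{\cleancal{S}}(\Omega_\rho^{\otimes n})\le \|\rho\|_{\ell_1}^n .
\]

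\emph{Step 2 (regularise).} Take $\frac1n\log$ and let $n\to\infty$. The upper bound gives $\log\|\rho\|_{\ell_1}$ exactly. The lower bound gives
\[
\frac1n\log\frac{\|\rho\|_{\ell_1}^n+1}{2}\;\longrightarrow\;\log\|\rho\|_{\ell_1},
\]
because the factor $\tfrac12$ contributes only $-\tfrac1n$ and $\|\rho\|_{\ell_1}\ge1$. For both $\cleancal{K}=\cleancal{S}$ and $\cleancal{K}=\cleancal{PPT}$ the regularised robustness is therefore squeezed to $\log\|\rho\|_{\ell_1}$, which gives \eqref{eq:9}. This is exactly why the factor-2 gap in Lemma~\ref{l1}, and even the sloppier factor in the Remark, is harmless: constant multiplicative factors vanish under regularisation.

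\emph{Main obstacle.} The delicate point is invoking the correct exact-cost characterisation and checking it matches the definition of $E^{\mathrm{exact}}_{c}$ used here. Two things need care: integer rounding, since the target is $\Phi_2^{\otimes\lfloor Rn\rfloor}$ rather than a maximally entangled state of arbitrary Schmidt rank, and the requirement that conversion be possible for all $n\ge n_0$. I would argue the achievability and converse directions explicitly. For achievability, take $R>\log\|\rho\|_{\ell_1}$; then $2^{\lfloor Rn\rfloor}\ge 1+R^s(\Omega_\rho^{\otimes n})$ for large $n$, and I would use the standard twirled measure-and-prepare map $\Lambda(X)=\tr[\Phi X]\,\omega+\tr[(\id-\Phi)X]\,\sigma$, with $\sigma$ built from the optimal robustness decomposition. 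For the converse, a $\cleancal{K}$-preserving map applied to $\Phi_{2^m}$ gives a robustness bound $R^s(\Lambda(\Phi))\le 2^m-1$, by monotonicity of $R^s$ under $\cleancal{K}$-preserving maps together with $R^s(\Phi_M)=M-1$. Both directions are standard, so the limit computation in Step 2 then closes the argument.
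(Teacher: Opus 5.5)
Your proposal is correct and follows the same route as the paper: apply Lemma~\ref{l1} to $\Omega_\rho^{\otimes n}=\Omega_{\rho^{\otimes n}}$, use $\|\rho^{\otimes n}\|_{\ell_1}=\|\rho\|_{\ell_1}^n$, and regularise the resulting two-sided bound on the standard robustness. The one cosmetic difference is that you regularise $\log\big(1+R^s_{\cleancal{K}}\big)$, which is the exact one-shot threshold that your twirling and monotonicity argument establishes, whereas the paper writes $\log\big(1+2R^s_{\cleancal{K}}\big)$; since $1+R\le 1+2R\le 2(1+R)$, the two differ by at most one bit per block and give the same limit.
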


\begin{proof}
On the one hand, using the upper bound in (\ref{eq:4}) we have that
\begin{align}
E_{c, \cleancal{K}}^{\text{exact}}(\Omega_\rho) &= \lim_{n \to \infty} \frac{1}{n} \log \left( 1 + 2 R^s_{\cleancal{K}} \left( \Omega_\rho^{\otimes n} \right) \right) \nonumber \\
&\le \lim_{n \to \infty} \frac{1}{n} \log \left( 2 \|\rho^{\otimes n}\|_{\ell_1} - 1 \right) \nonumber \\
&= \lim_{n \to \infty} \frac{1}{n} \log \left( 2 \|\rho\|_{\ell_1}^n - 1 \right) \nonumber \\
&= \log \|\rho\|_{\ell_1} . \label{eq:10}
\end{align}
On the other, leveraging the lower bound in (\ref{eq:4}) we have that
\begin{equation}
E_{c, \cleancal{K}}^{\text{exact}}(\Omega_\rho) = \lim_{n \to \infty} \frac{1}{n} \log \left( 1 + 2 R^s_{\cleancal{K}} \left( \Omega_\rho^{\otimes n} \right) \right) \ge \log \|\rho\|_{\ell_1} , \label{eq:11}
\end{equation}
which completes the proof.
\end{proof}

\begin{rem}
For the exact $\mathrm{PPT}$ entanglement cost, Audenaert, Plenio, and Eisert established in their seminal work~\cite{audenaert2003entanglement} that $E_{c,\ppt}^{\text{exact}}(\rho)$, can be computed analytically for any bipartite state $\rho = \rho_{AB}$ that exhibits `zero binegativity'~\cite{audenaert2003entanglement,Satoshi,werner02}. This condition is formally expressed as $|\rho^\Gamma|^\Gamma \ge 0$, where $X^\Gamma$ denotes the partial transpose $\Gamma(X)$ of an operator $X$, and $|X| \coloneqq \sqrt{X^\dagger X}$ defines its absolute value. Whenever this criterion is met, the exact PPT cost simplifies to the logarithmic negativity~\cite{vidal2002computable,plenio2005logarithmic,wang2020cost}, obtained by evaluating the trace norm $\| \cdot \|_1 \coloneqq \text{Tr}|\cdot|$ of the partially transposed state:
\begin{equation}
    E_{c,\ppt}^{\text{exact}}(\rho) = \log \|\rho^\Gamma\|_1 \coloneqq E_N(\rho) .
\end{equation}

Since $|\Omega_\rho^\Gamma|$ is diagonal in a product basis and hence $|\Omega_\rho^\Gamma|^\Gamma \ge 0$, \textit{i.e.}\ $\Omega_\rho$ has zero binegativity, 
\begin{equation}
E_{c, \ppt}^{\text{exact}}(\Omega_\rho) = \log \|\rho\|_{\ell_1} . \label{eq:12}
\end{equation}
We could have used this information to obtain a more direct proof of the inequality $E_{c, \text{PPTP}}^{\text{exact}}(\Omega_\rho) \le \log \|\rho\|_{\ell_1}$, as in $E_{c, \text{PPTP}}^{\text{exact}}(\Omega_\rho) \le E_{c, \ppt}^{\text{exact}}(\Omega_\rho) = \log \|\Omega_\rho^\Gamma\|_1 = \log \|\rho\|_{\ell_1}$.
\end{rem}

\section{Appendix B: Tempered negativity and standard entanglement cost under NE}\label{appendixB}

Let us remember 
that the tempered negativity~\cite{Lami2023} of a state $\omega$ is defined as 
\begin{equation}
N_\tau(\omega)\coloneqq \sup \left\{ \tr \omega X: \|X^\Gamma\|_\infty \le 1, \tr \omega X = \|X\|_\infty \right\} . \label{eq:13}
\end{equation}
Its logarithmic version can be used to lower bound the entanglement cost under $\cleancal{K}$-preserving operations (here as usual $\cleancal{K} = \cleancal{S}, \cleancal{PPT}$) at arbitrary error threshold $\varepsilon \in [0, 1/2)$, as in~\cite{Lami2023}
\begin{equation}
E_{c, \cleancal{K}}^{\text{exact}}(\omega) \ge E_{c, \cleancal{K}}^\varepsilon(\omega) \ge E_\tau^N(\omega)\coloneqq \log N_\tau(\omega) . \label{eq:14}
\end{equation}
It is therefore of interest to try to compute $N_\tau$ in some cases of interest. Here we provide a new result in this direction:

\begin{proposition}\label{prop1}
Let $\rho = \frac{\Pi_r}{r}$ be a normalised projector of rank $r < d$ inside a $d$-dimensional space. Then, defining
\begin{equation}
\lambda\coloneqq \max_i \rho_{ii} , \quad \mu\coloneqq \min_i \rho_{ii} , \quad \gamma\coloneqq \max_{i \neq j} |\rho_{ij}| , \label{eq:15}
\end{equation}
we have that the tempered negativity of the associated maximally correlated state $\Omega_\rho$ is at least

\bb
N_\tau(\Omega_\rho) \ge \max_{a,b \in \mathbb{R}} \Big\{ a + \frac{b}{r} \;\Big|\; \max \big\{|a + \lambda b| , |a + \mu b| , b\gamma\big\} \le 1, \ a + \frac{b}{r} \ge |a| \Big\} . \label{eq:16}
\ee

In particular, if $\lambda = \mu = 1/d$, \textit{i.e.}\ $\rho$ has uniform diagonal, and $r \le d/2$, then
\begin{equation}
N_\tau(\Omega_\rho) \ge 1 + \frac{1}{\gamma} \left( \frac{1}{r} - \frac{1}{d} \right) . \label{eq:17}
\end{equation}
\end{proposition}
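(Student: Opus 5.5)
The plan is to exhibit an explicit feasible operator $X$ in the definition \eqref{eq:13} of $N_\tau(\Omega_\rho)$. The natural two-parameter family is
\begin{equation*}
X = a\,P + b\,\Omega_\rho, \qquad P\coloneqq \sum_{i=1}^d \ketbra{ii},
\end{equation*}
that is, $X=\Omega[a\id + b\rho,\,0]$ in the notation of Appendix A. Both terms are supported on the ``maximally correlated'' subspace $\operatorname{span}\{\ket{ii}\}$. We then compute the three quantities $\tr\Omega_\rho X$, $\|X\|_\infty$ and $\|X^\Gamma\|_\infty$, and read off the constraints in \eqref{eq:16}.

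First, the objective. We have $\tr \Omega_\rho P = \tr\rho = 1$ and $\tr\Omega_\rho\Omega_\rho = \tr\rho^2 = 1/r$, since $\rho = \Pi_r/r$. Hence $\tr\Omega_\rho X = a + b/r$.

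Second, the operator norm. On $\operatorname{span}\{\ket{ii}\}$, $X$ acts as the matrix $a\id + b\rho$, and $X$ vanishes elsewhere. The eigenvalues of $a\id+b\rho$ are $a+b/r$ (multiplicity $r$) and $a$ (multiplicity $d-r\ge1$, which is where $r<d$ is used). So $\|X\|_\infty=\max\{|a+b/r|,|a|\}$. The condition $\tr\Omega_\rho X=\|X\|_\infty$ is therefore equivalent to $a+b/r\ge|a|$.

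Third, the partial transpose. We have
\begin{equation*}
X^\Gamma = \sum_i (a+b\rho_{ii})\ketbra{ii} + b\sum_{i\neq j}\rho_{ij}\ketbraa{ij}{ji}.
\end{equation*}
This is block diagonal: it has $1\times1$ blocks $a+b\rho_{ii}$ on $\ket{ii}$, and $2\times 2$ blocks $\lsmatrix 0 & b\rho_{ij}\\ b\rho_{ij}^* & 0\rsmatrix$ on $\operatorname{span}\{\ket{ij},\ket{ji}\}$ for $i<j$, with eigenvalues $\pm b|\rho_{ij}|$. Since $a+b\rho_{ii}$ is affine in $\rho_{ii}\in[\mu,\lambda]$, its maximum modulus over $i$ is at most $\max\{|a+\lambda b|,|a+\mu b|\}$. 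Thus $\|X^\Gamma\|_\infty\le \max\{|a+\lambda b|,|a+\mu b|,|b|\gamma\}$.

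Any $b$ satisfying $a+b/r\ge|a|$ is automatically $\ge 0$, so $|b|\gamma=b\gamma$. Every $(a,b)$ in the feasible set of \eqref{eq:16} thus yields an admissible $X$ with value $a+b/r$, which proves the general bound.

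For the special case $\lambda=\mu=1/d$ and $r\le d/2$, choose $b=1/\gamma$ and $a=1-b/d$. Here $\gamma>0$: otherwise $\rho=\id/d$, which would have rank $d$. With this choice, $|a+b/d|=1$ and $b\gamma=1$. Moreover $a+b/r\ge a$ trivially, and
\begin{equation*}
a+b/r\ge -a \iff 2-2b/d+b/r\ge 0,
\end{equation*}
which holds because $1/r\ge 2/d$. The resulting value is $1+\frac1\gamma\big(\frac1r-\frac1d\big)$, which is \eqref{eq:17}.

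The only step requiring care is guessing the ansatz: using $P$ rather than $\id$ as the ``$a$'' term. With $\id$, the off-diagonal blocks of $X^\Gamma$ would become $|a|+b\gamma$ instead of $b\gamma$, giving a strictly weaker constraint. The rest is routine block diagonalisation.
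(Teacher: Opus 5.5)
Your proposal is correct and follows the paper's proof essentially step for step: the same ansatz $X = a\,\mathbb{P} + b\,\Omega_\rho$, the same computation of $\tr X\Omega_\rho$, $\|X\|_\infty$ and the block structure of $X^\Gamma$, and the same choice $b = 1/\gamma$, $a = 1 - \frac{1}{d\gamma}$ in the uniform-diagonal case. Your small additions make explicit points the paper leaves implicit: you only need $\le$ for $\|X^\Gamma\|_\infty$, feasibility forces $b \ge 0$, and $\gamma > 0$ because otherwise $\rho = \id/d$.
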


\begin{proof}
Constructing the ansatz
\begin{equation}
X = a\mathbb{P} + b \Omega_\rho , \label{eq:18}
\end{equation}
where $\mathbb{P}$ is the projector onto the maximally correlated subspace. Then
\bb \label{eq:19}
\|X\|_\infty &= \|a\id + b\rho\|_\infty \\ &= \max \left\{ \left| a + \frac{b}{r} \right| , |a| \right\} ,
\ee
while
\bb \label{eq:20}
\begin{aligned}
\tr X\Omega_\rho &= a + b \tr \rho^2 \\ 
&= a + \frac{b}{r} .
\end{aligned}
\ee
Therefore, the condition $\|X\|_\infty = \tr X\Omega_\rho$ is met iff $a + \frac{b}{r} \ge |a|$. Now, since it is easy to see that
\begin{equation}
\|X^\Gamma\|_\infty = \max \{|a + \lambda b| , |a + \mu b| , b\gamma\} , \label{eq:21}
\end{equation}
the last condition needed for $X$ to be a valid ansatz in the tempered negativity, \textit{i.e.}\ $\|X^\Gamma\|_\infty \le 1$, becomes the one stated in (\ref{eq:16}).

If $\lambda = \mu = 1/d$ and $r \le d/2$ we can make the further ansatz $a = 1 - \frac{1}{d\gamma}$ and $b = \frac{1}{\gamma}$, which yields precisely (\ref{eq:17}).
\end{proof}

\begin{corollary}\label{cor3}
Let $V$ be a $k \times k$ unitary matrix. Then the maximally correlated state $\Omega_V$ associated with (\ref{eq:22}) satisfies that
\begin{equation}
N_\tau(\Omega_V) \ge 1 + \frac{1}{\delta(V)} , \label{eq:23}
\end{equation}
where $\delta(V)\coloneqq \max_{i,j} |V_{ij}|$.
\end{corollary}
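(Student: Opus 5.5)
The plan is to apply the second part of Proposition~\ref{prop1} directly to $\rho = \rho_V$ from~\eqref{eq:22}, with $d = 2k$. The work lies in checking that $\rho_V$ meets the hypotheses of that proposition and in identifying $\gamma$ in terms of $\delta(V)$.

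First, I will check that $\rho_V$ is a normalised projector of rank $k$. Write $P \coloneqq \frac12 \lsmatrix \id & V \\ V^\dagger & \id \rsmatrix$. Using unitarity, $VV^\dagger = V^\dagger V = \id$, one computes
\[
\lsmatrix \id & V \\ V^\dagger & \id \rsmatrix^2 = 2 \lsmatrix \id & V \\ V^\dagger & \id \rsmatrix ,
\]
so $P^2 = P$ and $P = P^\dagger$. Its rank equals its trace, $\tr P = \frac12 \cdot 2k = k$. Hence $\rho_V = P/k = \Pi_k/k$ with $r = k < 2k = d$. In particular $r = d/2$, so the condition $r \le d/2$ in Proposition~\ref{prop1} holds.

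Next, I will read off the parameters defined in~\eqref{eq:15}. Every diagonal entry of $\rho_V$ equals $\frac{1}{2k} = \frac1d$, so $\lambda = \mu = 1/d$, and $\rho_V$ has uniform diagonal. For the off-diagonal entries, those inside the two diagonal blocks $\frac{1}{2k}\id$ vanish. The remaining ones are the entries of $\frac{1}{2k}V$ and $\frac{1}{2k}V^\dagger$. Therefore
\[
\gamma = \max_{i\neq j} |(\rho_V)_{ij}| = \frac{\delta(V)}{2k},
\]
which is strictly positive because a unitary cannot be the zero matrix.

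Finally, I will substitute these values into~\eqref{eq:17}:
\[
N_\tau(\Omega_V) \ge 1 + \frac{2k}{\delta(V)}\left(\frac1k - \frac{1}{2k}\right) = 1 + \frac{1}{\delta(V)} ,
\]
which is exactly the claim. I do not expect a genuine obstacle here. The only points needing care are the idempotency check, where unitarity is used, and making sure that $\gamma$ picks up only the entries coming from $V$ and not the zero off-diagonal entries of the identity blocks.
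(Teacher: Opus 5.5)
Your proposal is correct and follows the paper's proof essentially step for step. Both arguments show that $k\rho_V$ is a rank-$k$ projector with uniform diagonal $1/(2k)$, and then substitute $r = k = d/2$ and $\gamma = \delta(V)/(2k)$ into~\eqref{eq:17}. The only cosmetic difference is that you check idempotency by direct block multiplication, whereas the paper observes $(2k\rho_V - \id)^2 = \id$ and counts the $\pm 1$ eigenvalues using the vanishing trace.
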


\begin{proof}
We start by noting that since $(2k\rho_V - \id)^2 = \id$, the operator $2k\rho_V - \id$ can only have eigenvalues $\pm 1$. Given that its trace is $0$, it has in fact $k$ eigenvalues equal to $+1$ and $k$ equal to $-1$. It follows immediately that $k\rho_V$ is in fact a projector of rank $k = d/2$. Since it has uniform diagonal, we can apply Proposition~\ref{prop1}, and in particular the estimate (\ref{eq:17}). Setting $r = k = d/2$ and $\gamma = \frac{\delta(V)}{2k}$ yields precisely (\ref{eq:23}).
\end{proof}

\section{Appendix C: Proof of Theorem 1 }\label{appendixC}
For ease of reference, we recall Theorem~\ref{theo2}:

\begin{manualthm}{\ref{theo2}}
Let $k \ge 2$. The flower state $\Omega_{F_k}$ satisfies
\begin{equation} \tag{\ref{eq:26}}
E_{d, \sepp}(\Omega_{F_k}) = E_R^\infty(\Omega_{F_k}) = E_R(\Omega_{F_k}) = \log 2, 
\end{equation}
but
\begin{equation} \tag{\ref{eq:27}}
E_{c, \locc}(\Omega_{F_k}) = \Esq (\Omega_{F_k}) = \log\left(2\sqrt{k}\right) > \log\left(1+\sqrt{k}\right) = E_{c,\sepp}^{\text{exact}}(\Omega_{F_k}) = E_{c,\sepp}(\Omega_{F_k}) 
\end{equation}
Also,
\begin{equation} \tag{\ref{eqn10}}
E_{c,\ppt}^{\text{exact}}(\Omega_{F_k}) = \log \|\Omega_{F_k}^\Gamma\|_1 = \log (1+\sqrt{k}) = E_{c,\sepp}^{\text{exact}}(\Omega_{F_k}) .
\end{equation}
\end{manualthm}
where $X^\Gamma$ denotes the partial transpose $\Gamma(X)$ of an operator $X$, and the trace norm is given by $\| \cdot \|_1 = \tr | \cdot |$,   where $|X| \coloneqq \sqrt{X^\dagger X}$ defines its absolute value.

\begin{proof}
From the proof of Corollary~\ref{cor3}, we remember that $\Omega_{F_k}$ is a normalised projector of rank $k$. Since it is also maximally correlated, it follows from~\cite{rains01}, Theorem 6.2 that the PPT distillable entanglement of a maximally correlated state is given by the same formula as the hashing bound of~\cite{hashing}, and thus coincides with the one-way LOCC distillable entanglement. Since in general$$E_{d,\onelocc}(\omega) \leq E_d(\omega) \leq E_{d,\sep}(\omega) \leq E_{d,\ppt}(\omega),$$

for maximally correlated states all four quantities coincide and we also know that~\cite{berta}
$$E_{d,{\ane}}(\rho) = E_{d,\sepp}(\rho)=E_R^\infty(\rho_{AB})$$

Hence, we have that
\bb
E_{d,\kp}(\Omega_{F_k}) &= E_{d,\locc}(\Omega_{F_k}) \\
&= E_R^\infty(\Omega_{F_k})\\ &= E_R(\Omega_{F_k}) \\
&= S(\tr_1 \Omega_{F_k}) - S(\Omega_{F_k}) \\
&= \log(2k) - \log k \\
&= \log 2 . \label{eq:28}
\ee
We can also safely apply Corollary~\ref{cor2}, the lower bounds in (\ref{eq:14}), and Corollary~\ref{cor3}, which together yield
\bb
\log\!\left(1+\sqrt{k}\right) 
= \log\|\rho_{F_k}\|_{\ell_1} 
\overset{\text{Cor.\,\ref{cor2}}}{=} E_{c,\kp}^{\text{exact}}(\Omega_{F_k}) 
\overset{(\ref{eq:14})}{\ge} E_{c,\kp}^\varepsilon(\Omega_{F_k}) 
\overset{(\ref{eq:14})}{\ge} E_\tau^N(\Omega_{F_k}) 
\overset{\text{Cor.\,\ref{cor3}}}{\ge} \log\!\left(1+\sqrt{k}\right) . 
\label{eq:29}
\ee
for all $\varepsilon \in [0, 1/2)$. Therefore,
\begin{equation}
E_{c, \kp}^{\text{exact}}(\Omega_{F_k}) = E_{c, \kp}^\varepsilon(\Omega_{F_k}) = \log \left( 1 + \sqrt{k} \right) \label{eq:30}
\end{equation}
for all $\varepsilon \in [0, 1/2)$. Since from (\cite{christandl2005uncertainty}, Section III) we know that
\begin{equation}
E_{c, \locc}(\Omega_{F_k}) = \Esq (\Omega_{F_k}) = \log \left( 2\sqrt{k} \right) , \label{eq:31}
\end{equation}
the proof is complete.
\end{proof}

\section{Appendix D:  Regularised Schmidt number of the flower state and zero-error LOCC cost}\label{appendixD}

Since the flower states are maximally correlated states, computing their regularised Schmidt number is equivalent to computing the regularised coherence number of the state to which it is associated:
\begin{equation}
    \rho_{F_k} \coloneqq \frac{1}{2k} \begin{pmatrix} \id & F^{\vphantom{\dagger}}_{k} \\ F_k^\dagger & \id \end{pmatrix} \label{eq:22}
\end{equation}
Before going into the technical part of the proof, it is worthwhile to remember that the coherence number ($\mathrm{CN}$) of a mixed state is:
\begin{equation}
    \mathrm{CN}(\rho) = \min_{\{p_i, |\phi_i\rangle\}} \left( \max_i \text{CR}(|\phi_i\rangle) \right)
\end{equation}

With this in mind,  we define a channel  $\cleancal{E}(\rho) = \frac{1}{k^2} \sum_{m,n} (V_m W_n) \rho (V_m W_n)^\dagger$, which we shall show gives a valid convex decomposition of the state $\rho_{F_k}$. 

The controlled unitary operators $V_m$ and $W_n$ are defined as follows:
\bb
V_m &= |0\rangle\langle0| \otimes X^m + |1\rangle\langle1| \otimes Z^m \\
W_n &= |0\rangle\langle0| \otimes Z^n + |1\rangle\langle1| \otimes X^{-n}
\ee

Since the control basis states are orthogonal, we have:
\begin{equation}
    V_m W_n = |0\rangle\langle0| \otimes (X^m Z^n) + |1\rangle\langle1| \otimes (Z^m X^{-n})
\end{equation}

It is trivial to see that:
\bb
F_k^\dagger X F_k^{\vphantom{\dagger}} = Z \quad \text{and} \quad F_k^\dagger Z F_k^{\vphantom{\dagger}} = X^{-1}
\ee
Using these relations, we rewrite the target operator acting on the $|1\rangle\langle1|$ block:
\bb
\begin{aligned}
Z^m X^{-n} &= (F_k^\dagger X F_k^{\vphantom{\dagger}})^m (F_k^\dagger Z F_k^{\vphantom{\dagger}})^n \\
&= F_k^\dagger (X^m Z^n) F_k^{\vphantom{\dagger}}
\end{aligned}
\ee
 The joint twirling operator $V_m W_n$ expressed in $2 \times 2$ block diagonal matrix form is:\bb
V_m W_n &= |0\rangle\langle0| \otimes (X^m Z^n) + |1\rangle\langle1| \otimes F_k^\dagger (X^m Z^n) F_k \\
&= \begin{bmatrix} X^m Z^n & 0 \\ 0 & F_k^\dagger (X^m Z^n )F_k \end{bmatrix}
\ee

A key property of the Heisenberg-Weyl group on a $k$-dimensional space is that the set of operators $\{X^m Z^n\}_{m,n=0}^{k-1}$ forms a quantum 1-design. By Schur's Lemma, uniformly averaging any operator over this group completely depolarises the state into the identity matrix:
\bb
\frac{1}{k^2} \sum_{m=0}^{k-1} \sum_{n=0}^{k-1} (X^m Z^n) M (X^m Z^n)^\dagger = \frac{\text{Tr}(M)}{k} \cdot \id_k
\ee
An arbitrary pure state $|\psi\rangle$ within the support of $\rho_{F_k}$ has the following structure:\bb
| \psi \rangle &= \frac{1}{\sqrt{2}} \left( |0\rangle \otimes |\tilde{\psi}\rangle + |1\rangle \otimes F_k^\dagger |\tilde{\psi}\rangle \right) \\
&= \frac{1}{\sqrt{2}} \begin{bmatrix} |\tilde{\psi}\rangle \\ F_k^\dagger |\tilde{\psi}\rangle \end{bmatrix}
\ee where $|\tilde{\psi}\rangle$ is a normalised state vector in the target space.

Thus, \bb
V_m W_n |\psi\rangle &= \frac{1}{\sqrt{2}} \begin{bmatrix} X^m Z^n & 0 \\ 0 & F_k^\dagger X^m Z^n F_k \end{bmatrix} \begin{bmatrix} |\tilde{\psi}\rangle \\ F_k^\dagger |\tilde{\psi}\rangle \end{bmatrix} \\
&= \frac{1}{\sqrt{2}} \begin{bmatrix} X^m Z^n |\tilde{\psi}\rangle \\ \left( F_k^\dagger X^m Z^n F_k \right) \left( F_k^\dagger |\tilde{\psi}\rangle \right) \end{bmatrix}
\ee

Utilising the unitarity of the Fourier transform gives us: $$F_k^\dagger X^m Z^n \left( F_k F_k^\dagger \right) |\tilde{\psi}\rangle = F_k^\dagger X^m Z^n |\tilde{\psi}\rangle$$We now define the shifted target state vector as $|\tilde{\psi}_{m,n}\rangle \equiv X^m Z^n |\tilde{\psi}\rangle$. Substituting this definition back into the global state vector gives:
\bb
V_m W_n |\psi\rangle &= |\psi_{m,n}\rangle \\
&= \frac{1}{\sqrt{2}} \begin{bmatrix} |\tilde{\psi}_{m,n}\rangle \\ F_k^\dagger |\tilde{\psi}_{m,n}\rangle \end{bmatrix}
\ee
Thus,

\bb
|\psi_{m,n}\rangle\langle\psi_{m,n}| &= \frac{1}{2} \begin{bmatrix} 
|\tilde{\psi}_{m,n}\rangle\langle\tilde{\psi}_{m,n}| & |\tilde{\psi}_{m,n}\rangle\langle\tilde{\psi}_{m,n}| F_k \\ 
F_k^\dagger |\tilde{\psi}_{m,n}\rangle\langle\tilde{\psi}_{m,n}| & F_k^\dagger |\tilde{\psi}_{m,n}\rangle\langle\tilde{\psi}_{m,n}| F_k 
\end{bmatrix}
\ee

The action of the global twirling channel $\cleancal{E}(|\psi\rangle\langle\psi|)$ averages this ensemble uniformly over the indices $m, n$:\bb
\cleancal{E}(|\psi\rangle\langle\psi|) &= \frac{1}{k^2} \sum_{m,n=0}^{k-1} V_m W_n |\psi\rangle\langle\psi| (V_m W_n)^\dagger \\
&= \frac{1}{k^2} \sum_{m,n=0}^{k-1} |\psi_{m,n}\rangle\langle\psi_{m,n}|\\
&= \frac{1}{2} \begin{bmatrix} 
\left( \frac{1}{k^2} \sum_{m,n} |\tilde{\psi}_{m,n}\rangle\langle\tilde{\psi}_{m,n}| \right) & 
\left( \frac{1}{k^2} \sum_{m,n} |\tilde{\psi}_{m,n}\rangle\langle\tilde{\psi}_{m,n}| \right) F_k \\ 
F_k^\dagger \left( \frac{1}{k^2} \sum_{m,n} |\tilde{\psi}_{m,n}\rangle\langle\tilde{\psi}_{m,n}| \right) & 
F_k^\dagger \left( \frac{1}{k^2} \sum_{m,n} |\tilde{\psi}_{m,n}\rangle\langle\tilde{\psi}_{m,n}| \right) F_k 
\end{bmatrix}
\ee  Since the Heisenberg-Weyl operators $\{X^m Z^n\}$ constitute a perfect quantum 1-design, uniformly averaging over the target state ensemble completely depolarises the core space regardless of its initial structure:\bb
\frac{1}{k^2} \sum_{m,n=0}^{k-1} |\tilde{\psi}_{m,n}\rangle\langle\tilde{\psi}_{m,n}| &= \frac{1}{k^2} \sum_{m,n=0}^{k-1} (X^m Z^n) |\tilde{\psi}\rangle\langle\tilde{\psi}| (X^m Z^n)^\dagger \\
&= \frac{\text{Tr}(|\tilde{\psi}\rangle\langle\tilde{\psi}|)}{k} \id_k \\
&= \frac{\id_k}{k}
\ee

Hence, 

\bb
\begin{aligned}
\cleancal{E}(|\psi\rangle\langle\psi|) &= \frac{1}{2} \begin{bmatrix} \frac{\id_k}{k} & \frac{\id_k}{k} F_k \\ F_k^\dagger \frac{\id_k}{k} & F_k^\dagger \frac{\id_k}{k} F_k \end{bmatrix} \\
&= \frac{1}{2k} \begin{bmatrix} \id_k & F_k \\ F_k^\dagger & \id_k \end{bmatrix}\\& \equiv \rho_{F_k}
\end{aligned}
\ee

Because our twirling channel equation $\cleancal{E}(|\psi\rangle\langle\psi|) = \rho_{F_k}$ is itself a valid convex decomposition of $\rho_{F_k}$ with uniform probabilities $p_{m,n} = \frac{1}{k^2}$, the coherence number must be less than or equal to the maximum rank found within this specific ensemble:
\begin{equation}
    \mathrm{CN}(\rho_{F_k}) \le \max_{m,n} \text{CR}(V_m W_n |\psi\rangle)
\end{equation}
    The operator $V_m W_n$ applies local generalised Pauli shifts $X^m$ and adds phase  $Z^n$ to the target registers, hence merely permuting  their positions, and applying  phases, respectively. Neither operation alters the total number of non-zero coefficients (the $L_0$-norm) defining the state's coherence rank. Hence, 
\begin{equation}
    \text{CR}(V_m W_n |\psi\rangle) = \text{CR}(|\psi\rangle) \quad \forall m,n
\end{equation}
Substituting this invariance back into our inequality yields:
\begin{equation}
    \mathrm{CN}(\rho_{F_k}) \le \text{CR}(|\psi\rangle)
\end{equation}
Since this statement holds true for \textit{any} state $|\psi\rangle$ chosen inside the support of $\rho_{F_k}$, the inequality must also hold true for the absolute minimum rank state within that subspace:
\begin{equation}
    \mathrm{CN}(\rho_{F_k}) \le \min_{|\psi\rangle \in \text{supp}(\rho_{F_k})} \text{CR}(|\psi\rangle)
\end{equation}
Pairing this upper bound with the well-established lower bound property $\mathrm{CN}(\rho) \ge \min_{|\psi\rangle \in \text{supp}(\rho)} \text{CR}(|\psi\rangle)$, the two definitions squeeze together perfectly, proving strict equality:
\bb
\mathrm{CN}(\rho_{F_k}) = \min_{|\psi\rangle \in \text{supp}(\rho_{F_k})} \text{CR}(|\psi\rangle)
\ee

Having established that $\mathrm{CN}(\rho_{F_k})$ is equivalent to the minimum pure-state coherence rank within the support of $\rho_{F_k}$, we must now analyse the structure of the states residing in this subspace.

Any pure state $|\psi\rangle \in \text{supp}(\rho_{F_k})$ is of the form:
\begin{equation}
    |\psi\rangle = \frac{1}{\sqrt{2}} \left( |0\rangle \otimes |a\rangle + |1\rangle \otimes F_k^\dagger |a\rangle \right)
\end{equation}
where $|a\rangle \in \mathbb{C}^k$ is an arbitrary normalised state vector in the target qudit space. 

The coherence rank (or $L_0$-norm) of a bipartite state in the computational basis is simply the total count of its non-zero entries. Because the control states $|0\rangle$ and $|1\rangle$ are perfectly orthogonal, the non-zero entries of the two branches do not overlap. The total coherence rank is strictly the sum of the non-zero entries of the two target-space blocks:
\begin{equation}
    \text{CR}(|\psi\rangle) = \||a\rangle\|_0 + \|F_k^\dagger |a\rangle\|_0
\end{equation}
Substituting this directly into our optimisation result, the global coherence number problem reduces to a purely mathematical search for the state $|a\rangle$ that minimises the combined support of itself and its inverse-Fourier transform:
\begin{equation}\label{min} 
\mathrm{CN}(\rho_{F_k}) = \min_{|a\rangle \in \mathbb{C}^k} \left( \||a\rangle\|_0 + \|F_k^\dagger |a\rangle\|_0 \right)
\end{equation}

To find the absolute minimum of $\||a\rangle\|_0 + \|F_k^\dagger |a\rangle\|_0$, we invoke the discrete uncertainty principle. The standard Donoho-Stark uncertainty relation~\cite{donoho-stark} dictates that $\||a\rangle\|_0 \times \|F_k^\dagger |a\rangle\|_0 \ge k$.  Tao showed that this can be improved further for cyclic groups of prime order $k$~\cite{Tao2005}, and established that $\||a\rangle\|_0 + \|F_k^\dagger |a\rangle\|_0\ge k+1$, thus obtaining a tighter lower bound. Later, Meshulam considers the generic case of composite $k$~\cite{meshulam}, refining this bound for functions over finite abelian groups. We explicitly state it for the sake of completeness.

\begin{Lemma}[\cite{meshulam}]
   Let $G$ be a finite abelian group of order $k$. For a complex valued function $f$ on $G$, let $\widehat{f}$ denote the Fourier transform of $f$.
Let $d_1 < d_2$ be two consecutive divisors of $k$. If $d_1 \leq x = |\text{supp}(f)| \leq d_2$ then
\bb
|\text{supp}(\widehat{f})| \geq \frac{k}{d_1 d_2} (d_1 + d_2 - x)
\ee 
\end{Lemma}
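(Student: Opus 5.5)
Since this lemma is quoted from Meshulam, the plan is to reconstruct his argument: strong induction on $|G|=k$, using a subgroup to split $f$ into coset pieces. The base case $x=|\mathrm{supp}(f)|=1$ is immediate, since then $|\widehat f|$ is constant and nonzero, giving $|\mathrm{supp}(\widehat f)|=k$. This matches the bound with $d_1=1$. At the endpoints $x=d_1$ and $x=d_2$ the claim reduces to $|\mathrm{supp}(\widehat f)|\ge k/x$, which is the Donoho–Stark inequality quoted above. The content of the lemma is therefore the linear interpolation between consecutive divisors. Because $k/x$ is convex, this interpolation is genuinely stronger than Donoho–Stark for $d_1<x<d_2$.

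\emph{Step 1 (coset decomposition).} I will fix a subgroup $H\le G$ and its annihilator $H^\perp\le\widehat G$, so that $\widehat G/H^\perp\cong\widehat H$. For a character $\chi$ and $\psi\in H^\perp$, I will write
\[
\widehat f(\chi\psi)=\sum_{c\in G/H}\overline{\psi(c)}\,F_\chi(c),\qquad F_\chi(c)\coloneqq\sum_{g\in c}f(g)\overline{\chi(g)} .
\]
Thus, on each coset $\chi H^\perp$, $\widehat f$ is the Fourier transform over the smaller group $G/H$ of $F_\chi$. The support of $F_\chi$ is at most the number $t$ of $H$-cosets meeting $\mathrm{supp}(f)$. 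Dually, for fixed $g$, the restriction of $f$ to the coset $g+H$ has an $H$-Fourier transform that records $\widehat f$ along cosets of $H^\perp$.

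\emph{Step 2 (counting).} Let $x_c=|\mathrm{supp}(f)\cap c|$ for each coset $c$ of $H$, and let $s$ be the number of cosets $\chi H^\perp$ on which $\widehat f$ does not vanish identically. On each such coset I will apply the induction hypothesis for $G/H$ to $F_\chi$, whose support has size at most $t$. I will also use the inequality for $H$ to control $s$ through the maximum of the $x_c$. Summing over cosets will give a bound of the form $|\mathrm{supp}(\widehat f)|\ge s\cdot\phi_{G/H}(t)$, where $\phi_{G/H}$ denotes the piecewise-linear bound for $G/H$.

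\emph{Step 3 (choice of $H$).} I will choose $H$ of order $d_2/d_1$ when this is possible, or otherwise of prime order dividing $k$. The aim is that the divisors of $|G/H|$ and of $|H|$ line up with $d_1,d_2$. Since the target bound is linear on $[d_1,d_2]$, it should suffice to verify the inequality at the integer configurations $(t,x_c)$ realising a given $x$. The optimisation over how $x$ points distribute across cosets is then a discrete minimisation of products of piecewise-linear functions.

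\emph{Main obstacle.} The hard part is Step 3: showing that this minimisation over distributions of support across cosets never falls below the interpolated line $\frac{k}{d_1d_2}(d_1+d_2-x)$, uniformly in the group structure. For non-cyclic $G$ the consecutive divisors of $k$ need not factor compatibly with a single subgroup chain. I would first treat the cyclic case needed here, $G=\mathbb Z_k$, where subgroups exist for every divisor. I would try a double induction on the number of prime factors of $k$, and check extremality against the indicator functions of subgroups and their cosets, which attain equality at $x=d_1$ and $x=d_2$.
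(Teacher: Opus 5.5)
\textbf{There is a genuine gap: the base case is missing, and the key numerical inequality in the inductive step is left unproved.} The paper does not prove this lemma; it quotes it from Meshulam. So your reconstruction has to stand on its own, and while the skeleton is right, two pieces are missing.

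\emph{The base case.} A strong induction on $|G|$ bottoms out at groups of prime order, not at $x=1$. For $G\cong\mathbb{Z}_p$ the only subgroups are $\{0\}$ and $G$, so your Step~3 choice ``$H$ of prime order'' gives $H=G$, and the coset decomposition becomes circular. Yet for $\mathbb{Z}_p$ the consecutive divisors are $1$ and $p$, so the lemma asserts $|\mathrm{supp}(f)|+|\mathrm{supp}(\widehat f)|\ge p+1$ for every nonzero $f$. This is Tao's theorem. Its proof needs Chebotarev's lemma that every square minor of the $p\times p$ Fourier matrix is nonzero. Donoho--Stark only gives $p/x$, and nothing in your plan supplies this input, which is the one genuinely arithmetic ingredient of the whole result.

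\emph{The inductive step.} Here the step is left open and the obstacle is misdiagnosed. Let $\phi_n$ denote the piecewise-linear interpolation of $n/x$ at the divisors of $n$, and write $a=|H|$, $b=|G/H|$, $M=\max_c x_c$. Your Step~2 is sound. The induction hypothesis on $H$ gives $s\ge\max_c|\mathrm{supp}\,\widehat{f|_c}|\ge\phi_a(M)$. Each nonvanishing coset of $H^\perp$ contributes at least $\phi_b(t)$. Hence $|\mathrm{supp}(\widehat f)|\ge\phi_a(M)\phi_b(t)$.

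What remains is the purely numerical inequality $\phi_a(M)\phi_b(t)\ge\phi_{ab}(Mt)$; then $x\le tM$ and monotonicity of $\phi_{ab}$ conclude. This inequality holds for \emph{every} nontrivial proper $H$. So you need no tailored choice of $H$ (order $d_2/d_1$ need not even be an integer, e.g.\ $d_1=2$, $d_2=3$ for $k=6$), no double induction, and no cyclicity. The proof goes as follows.
\begin{enumerate}
\item Fix a cell $[d_1,d_2]\times[e_1,e_2]$ of consecutive divisors of $a$ and of $b$. On it, $\phi_a(M)\phi_b(t)=C(d_1+d_2-M)(e_1+e_2-t)$ with $C>0$.
\item Along the hyperbola $Mt=x$ this product equals $C\bigl[(d_1+d_2)(e_1+e_2)+x-(d_1+d_2)x/M-(e_1+e_2)M\bigr]$, which is concave in $M$. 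Its minimum over the cell is therefore attained on the boundary, where one variable is a divisor, say $M=d$.
\item There the product is exactly the chord of $ab/x$ between the divisors $de_1$ and $de_2$ of $ab$. By convexity of $ab/x$, this chord lies above the finer interpolation $\phi_{ab}$.
\end{enumerate}
With this inequality and the Chebotarev--Tao base case, your Steps~1--2 prove the lemma for all finite abelian groups.
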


Let the support size of the state be $x = \||a\rangle\|_0$. Let $d_1$ and $d_2$ be two consecutive exact integer divisors of the dimension $k$ such that the support size falls in the interval $d_1 \le x \le d_2$. Thus according to Meshulam,  the support of the Fourier transform is bounded below by a strict linear interpolation between these divisors:
\begin{equation}
    \|F_k^\dagger |a\rangle\|_0 \ge \frac{k}{d_1 d_2}(d_1 + d_2 - x)
\end{equation}
We apply this inequality to our coherence rank objective function:
\begin{equation}
    \text{CR}(|\psi\rangle) \ge x + \frac{k}{d_1 d_2}(d_1 + d_2 - x)
\end{equation}
Grouping the terms by the variable $x$ yields a linear equation of the form $f(x) = mx + b$:
\begin{equation}
    \text{CR}(|\psi\rangle) \ge \left( 1 - \frac{k}{d_1 d_2} \right) x + \frac{k(d_1 + d_2)}{d_1 d_2}
\end{equation}
A fundamental property of any linear function defined on a closed interval $[d_1, d_2]$ is that it achieves its absolute minimum and maximum values exclusively at the domain boundaries. It is mathematically impossible for the minimum to occur at an intermediate integer strictly between the divisors.

Consequently, to find the global minimum of the coherence rank, we are completely justified in restricting our search to the domain boundaries where the support size $x$ is an exact divisor $r$ of $k$ (\textit{i.e.}\,$x = r$). If we evaluate the uncertainty sum precisely at an exact divisor $r$, the Fourier support bound simplifies directly to $k/r$. Thus, Meshulam's theorem  provides a rigorous, closed-form lower bound for the coherence number of the flower state:
\begin{equation}
    \mathrm{CN}(\rho_{F_k}) \ge \min_{r | k} \left( r + \frac{k}{r} \right)
\end{equation}
 which for prime $k$ boils down to $(k+1)$ as shown by Tao~\cite{Tao2005}.

To upgrade this lower bound to a strict equality, we must explicitly construct a valid quantum state $|a\rangle$ that perfectly achieves it. 

Let $r$ be the divisor of $k$ that minimises the expression $r + k/r$, and let $s$ be its complement such that $k = r \cdot s$. We define our test state $|a\rangle$ as:
\begin{equation}
    |a\rangle = \frac{1}{\sqrt{r}} \sum_{j=0}^{r-1} |j \cdot s\rangle
\end{equation}
By direct observation, this state possesses exactly $r$ non-zero amplitudes. Therefore, its standard support is exactly $\||a\rangle\|_0 = r$. 

Now, from the definition of the Fourier matrix,
\begin{equation}
    F_k^\dagger |y\rangle = \frac{1}{\sqrt{k}} \sum_{x=0}^{k-1} e^{-i \frac{2\pi}{k} x \cdot y} |x\rangle
\end{equation}
By linearity, applying $F_k^\dagger$ to $|a\rangle$ yields:
\begin{equation}
    F_k^\dagger |a\rangle = \frac{1}{\sqrt{k}} \sum_{x=0}^{k-1} \left( \frac{1}{\sqrt{r}} \sum_{j=0}^{r-1} e^{-i \frac{2\pi}{k} x\cdot (j \cdot s)} \right) |x\rangle
\end{equation}
We now substitute $k = r \cdot s$.  Thus,
\begin{equation}
    -i \frac{2\pi}{r \cdot s} x \cdot(j \cdot s) = -i \frac{2\pi}{r} x\cdot j
\end{equation}
The expression for the transformed state simplifies to:
\begin{equation}
    F_k^\dagger |a\rangle = \frac{1}{\sqrt{k \cdot r}} \sum_{x=0}^{k-1} \left( \sum_{j=0}^{r-1} \left( e^{-i \frac{2\pi}{r} x } \right)^j \right) |x\rangle
\end{equation}
Since the inner summation over $j$ is a geometric series of the $r$-th roots of unity, this sum evaluates to $r$ if the external index $x$ is an exact multiple of $r$. If $x$ is not a multiple of $r$, the phases destructively interfere and the sum evaluates to exactly $0$. Thus, $F_k^\dagger |a\rangle$ is non-zero exclusively at indices $x \in \{0, r, 2r, \dots, (s-1)r\}$. Because the dimension is $k$, there are exactly $k/r = s$ such non-zero entries. We conclude:
\begin{equation}
    \|F_k^\dagger |a\rangle\|_0 = s = \frac{k}{r}
\end{equation}
The total coherence rank of this explicitly constructed bipartite state is exactly $r + k/r$. Because this matches the absolute theoretical lower bound derived from Meshulam's theorem, we have proven that the bound is physically saturable.

Hence, we establish the exact  closed-form analytic solution for the coherence number of the flower state:
\begin{equation}
    \mathrm{CN}(\rho_{F_k}) = \min_{r | k} \left( r + \frac{k}{r} \right)
\end{equation}
To evaluate the asymptotic resource cost of the state, we must calculate the regularised coherence number $CN^\infty(\rho_{F_k}) = \lim_{n \to \infty} [CN(\rho_{F_k}^{\otimes n})]^{1/n}$. We shall evaluate it  via mathematical induction.

Any pure state $|\psi^{(n)}\rangle \in \text{supp}(\rho_{F_k}^{\otimes n})$ is a linear combination of the $n$-body eigenstates:$$|\psi^{(n)}\rangle = \sum_{m_1=0}^{k-1} \dots \sum_{m_n=0}^{k-1} A_{m_1, \dots, m_n} |\phi_{m_1}\rangle \otimes \dots \otimes |\phi_{m_n}\rangle$$

where \begin{equation}
    |\phi_m\rangle = \frac{1}{\sqrt{2}} |0, m\rangle + \frac{1}{\sqrt{2}} (\id \otimes F_k^\dagger)\ket{1}|m\rangle
\end{equation}
and $m \in \{0,1, \cdots, k-1\}$

Expanding the eigenstates separates the system into an $n$-qubit control register and an $n$-qudit target register. We group the terms by the $n$-bit control strings $\mathbf{y} \in \{0,1\}^n$:$$|\psi^{(n)}\rangle = \frac{1}{2^{n/2}} \sum_{\mathbf{y} \in \{0,1\}^n} |\mathbf{y}\rangle \otimes |A_{\mathbf{y}}\rangle$$, where each target branch $|A_{\mathbf{y}}\rangle$ is: $|A_{\mathbf{y}}\rangle = \left( \bigotimes_{i=1}^n (F_k^\dagger)^{y_i} \right) |A_{\mathbf{0}}\rangle$.

As the control states $|\mathbf{y}\rangle$ are strictly orthogonal, the total coherence rank of the $n$-copy state is the sum of the $L_0$-norms of the individual target branches:$$\text{rc}(|\psi^{(n)}\rangle) = \sum_{\mathbf{y} \in \{0,1\}^n} \||A_{\mathbf{y}}\rangle\|_0$$Let $U_k$ represent the absolute minimum bound of the discrete 1-qudit uncertainty principle, parametrised by exact divisors $r$: $U_k = \min_{r|k} (r + k/r)$. We proceed by induction to prove that $\text{rc}(|\psi^{(n)}\rangle) \ge U_k^n$. The base case $n=1$ is rigorously established by Meshulam's theorem. We form our inductive hypothesis: assume the bound holds for any $(n-1)$-copy state. Namely, for any non-zero $(n-1)$-qudit tensor $T$, the sum of its branch ranks satisfies:$$\sum_{\mathbf{y}' \in \{0,1\}^{n-1}} \left\| \left( \bigotimes_{i=1}^{n-1} (F_k^\dagger)^{y_i'} \right) |T\rangle \right\|_0 \ge U_k^{n-1} \cdot \id(|T\rangle \neq 0)$$

where $$\id(|T\rangle \neq 0) = 
\begin{cases} 
1 & \text{if } |T\rangle \text{ is a non-zero vector} \\
0 & \text{if } |T\rangle \text{ is the zero vector } (\text{all entries are } 0)
\end{cases}$$

Consider the $n$-copy target branches. We partition the control string into the first qubit decision $y_1 \in \{0,1\}$ and the remaining sequence $\mathbf{y}_{\text{rest}} \in \{0,1\}^{n-1}$. The total rank is:$$\text{rc}(|\psi^{(n)}\rangle) = \sum_{\mathbf{y}_{\text{rest}}} \Big[ \||A_{0, \mathbf{y}_{\text{rest}}}\rangle\|_0 + \||A_{1, \mathbf{y}_{\text{rest}}}\rangle\|_0 \Big]$$We slice the global $n$-qudit state along the computational basis of the first target register, indexed by $m_1$. Let $|T^{(m_1)}\rangle$ be the $(n-1)$-qudit state defined by fixing the first register of $|A_{\mathbf{0}}\rangle$ to $m_1$.The $y_1=0$ branch leaves the first register as it is. Thus this branch yields:$$|A_{0, \mathbf{y}_{\text{rest}}}\rangle = \sum_{m_1=0}^{k-1} |m_1\rangle \otimes \left[ \left( \bigotimes_{i=2}^n (F_k^\dagger)^{y_i} \right) |T^{(m_1)}\rangle \right] \equiv \sum_{m_1=0}^{k-1} |m_1\rangle \otimes |v_{0, m_1, \mathbf{y}_{\text{rest}}}\rangle$$The $y_1=1$ branch applies $F_k^\dagger$ to the first register before mapping the remaining qudits. Expanding the transform and grouping by the output basis $|m_1\rangle$ yields:$$|A_{1, \mathbf{y}_{\text{rest}}}\rangle = \sum_{m_1=0}^{k-1} |m_1\rangle \otimes \left[ \left( \bigotimes_{i=2}^n (F_k^\dagger)^{y_i} \right) \left( \sum_{j_1=0}^{k-1} \frac{1}{\sqrt{k}} e^{2\pi i m_1 j_1 / k} |T^{(j_1)}\rangle \right) \right]$$To formalise this, we define the state $|\tilde{T}^{(m_1)}\rangle = \sum_{j_1} \frac{1}{\sqrt{k}} e^{2\pi i m_1 j_1 / k} |T^{(j_1)}\rangle$. The second branch slices cleanly as:$$|A_{1, \mathbf{y}_{\text{rest}}}\rangle = \sum_{m_1=0}^{k-1} |m_1\rangle \otimes \left[ \left( \bigotimes_{i=2}^n (F_k^\dagger)^{y_i} \right) |\tilde{T}^{(m_1)}\rangle \right] \equiv \sum_{m_1=0}^{k-1} |m_1\rangle \otimes |v_{1, m_1, \mathbf{y}_{\text{rest}}}\rangle$$Because $\{|m_1\rangle\}$ are strictly orthogonal, we sum the $L_0$-norms across the slices and swap the order of summation between $m_1$ and $\mathbf{y}_{\text{rest}}$:$$\text{rc}(|\psi^{(n)}\rangle) = \sum_{m_1=0}^{k-1} \left[ \sum_{\mathbf{y}_{\text{rest}}} \||v_{0, m_1, \mathbf{y}_{\text{rest}}}\rangle\|_0 + \sum_{\mathbf{y}_{\text{rest}}} \||v_{1, m_1, \mathbf{y}_{\text{rest}}}\rangle\|_0 \right]$$By definition, $|v_{0, m_1, \mathbf{y}_{\text{rest}}}\rangle$ and $|v_{1, m_1, \mathbf{y}_{\text{rest}}}\rangle$ are exactly the $\mathbf{y}_{\text{rest}}$ branches generated from the initial $(n-1)$-qudit states $|T^{(m_1)}\rangle$ and $|\tilde{T}^{(m_1)}\rangle$. Applying our inductive hypothesis directly to these inner sums factors out $U_k^{n-1}$:
\bb
\text{rc}(|\psi^{(n)}\rangle) \ge U_k^{n-1} \sum_{m_1=0}^{k-1} \Big[ \id(|T^{(m_1)}\rangle \neq 0) + \id(|\tilde{T}^{(m_1)}\rangle \neq 0) \Big]\label{eq:71}
\ee

Since the global state $|\psi^{(n)}\rangle$ is non-zero, the initial tensor $|A_{\mathbf{0}}\rangle$ must contain at least one non-zero entry. Thus, there exists a specific, fixed coordinate string $\mathbf{c} \in \mathbb{Z}_k^{n-1}$ for the remaining registers such that extracting this column across all slices yields a non-zero 1-qudit state. We construct a 1-qudit target vector $|a\rangle$ by evaluating $|T^{(m_1)}\rangle$ at column $\mathbf{c}$ for all $m_1$:$$|a\rangle = \sum_{m_1=0}^{k-1} a_{m_1} |m_1\rangle \quad \text{where} \quad a_{m_1} = \langle \mathbf{c} | T^{(m_1)} \rangle$$Similarly, we extract column $\mathbf{c}$ from the transformed state $|\tilde{T}^{(m_1)}\rangle$ to form $|b\rangle$:$$|b\rangle = \sum_{m_1=0}^{k-1} b_{m_1} |m_1\rangle \quad \text{where} \quad b_{m_1} = \langle \mathbf{c} | \tilde{T}^{(m_1)} \rangle$$

By substituting the explicit definition of the state $|\tilde{T}^{(m_1)}\rangle$ into the evaluation of $b_{m_1}$, we reveal the relationship between the extracted columns:$$b_{m_1} = \langle \mathbf{c} | \left( \sum_{j_1=0}^{k-1} \frac{1}{\sqrt{k}} e^{2\pi i m_1 j_1 / k} |T^{(j_1)}\rangle \right) = \sum_{j_1=0}^{k-1} \frac{1}{\sqrt{k}} e^{2\pi i m_1 j_1 / k} \langle \mathbf{c} | T^{(j_1)} \rangle$$Substituting $a_{j_1} = \langle \mathbf{c} | T^{(j_1)} \rangle$, we get:$$b_{m_1} = \sum_{j_1=0}^{k-1} \frac{1}{\sqrt{k}} e^{2\pi i m_1 j_1 / k} a_{j_1}$$This confirms that the 1-qudit vectors form an exact inverse-Fourier pair: $|b\rangle = F_k^\dagger |a\rangle$.

Hence, \begin{align*}\id(|T^{(m_1)}\rangle \neq 0) &\ge \id(a_{m_1} \neq 0) \end{align*}
\begin{align*}
\id(|\tilde{T}^{(m_1)}\rangle \neq 0) &\ge \id(b_{m_1} \neq 0)
\end{align*}

\color{black}
Substituting these lower bounds into the inequality derived in (\ref{eq:71}) yields:$$\text{rc}(|\psi^{(n)}\rangle) \ge U_k^{n-1} \left( \sum_{m_1=0}^{k-1} \id(a_{m_1} \neq 0) + \sum_{m_1=0}^{k-1} \id(b_{m_1} \neq 0) \right)$$$$\text{rc}(|\psi^{(n)}\rangle) \ge U_k^{n-1} \Big( \||a\rangle\|_0 + \||b\rangle\|_0 \Big)$$Because $|a\rangle$ was strictly chosen to be non-zero, and $|b\rangle = F_k^\dagger |a\rangle$, the pair is subject to the discrete 1-qudit uncertainty principle: $\||a\rangle\|_0 + \||b\rangle\|_0 \ge U_k$.$$\text{rc}(|\psi^{(n)}\rangle) \ge U_k^{n-1} \times U_k = U_k^n$$This completes the mathematical induction, proving that the coherence rank of any valid $n$-copy decomposition is absolutely lower-bounded by $U_k^n$.

Having bounded the state strictly from below, we establish the matching upper bound by explicitly constructing a valid convex decomposition of $\rho_{F_k}^{\otimes n}$. The $n$-copy mixed state can be generated by applying the $n$-fold tensor product of the generalised twirling channel $\cleancal{E}$  to any pure state within its global support. We define an optimal $n$-copy pure state as the $n$-fold tensor product of the optimal single-copy state: $|\Psi\rangle = |\psi_{opt}\rangle^{\otimes n}$. Because the coherence rank of a tensor product pure state is strictly multiplicative, we have:$$\text{rc}(|\Psi\rangle) = (\text{rc}(|\psi_{opt}\rangle))^n = U_k^n$$

Feeding $|\Psi\rangle$ into the twirling channel $\cleancal{E}^{\otimes n}$ generates a valid ensemble decomposition of $\rho_{F_k}^{\otimes n}$ consisting entirely of uniformly weighted pure states of the form $\left( \bigotimes_{i=1}^n V_{m_i} W_{n_i} \right) |\Psi\rangle$. As already mentioned, the twirling operators $V_m$ and $W_n$ consist exclusively of computational basis shifts and phases. Neither of these operations alters the total number of non-zero amplitudes. Therefore, every pure state in this generated ensemble possesses a coherence rank of exactly $U_k^n$:$$\text{rc}\left( \left( \bigotimes_{i=1}^n V_{m_i} W_{n_i} \right) |\Psi\rangle \right) = \text{rc}(|\Psi\rangle) = U_k^n \quad \forall \, \mathbf{m}, \mathbf{n}$$By the definition of the coherence number, the existence of a valid decomposition whose maximum pure-state rank is $U_k^n$ guarantees that the coherence number of the mixed state is upper-bounded by this value:$$CN(\rho_{F_k}^{\otimes n}) \le U_k^n$$Because the analytical lower bound structurally guarantees $CN(\rho_{F_k}^{\otimes n}) \ge U_k^n$, and our physically constructed twirled ensemble achieves $CN(\rho_{F_k}^{\otimes n}) \le U_k^n$, the two bounds meet in strict equality for any finite $n$:$$CN(\rho_{F_k}^{\otimes n}) = U_k^n$$Taking the regularised limit as $n \to \infty$ removes the exponential scaling identically:$$CN^\infty(\rho_{F_k}) = \lim_{n \to \infty} \left( U_k^n \right)^{1/n} = U_k$$Substituting the explicit Meshulam divisor minimisation for $U_k$, we yield the exact, closed-form asymptotic coherence number for any dimension:$$CN^\infty(\rho_{F_k}) = \min_{r|k} \left( r + \frac{k}{r} \right)$$

\section{Appendix E: The one-shot one-way LOCC protocol for distilling generalised flower states}\label{appendixF}

Let us recall the generalised flower states which are associated with the following states:

\bb
\rho_V \coloneqq \frac{1}{2k} \begin{pmatrix} \id & V \\ V^\dagger & \id \end{pmatrix} , 
\ee
where $V^{-1\vphantom{\dagger}} = V^\dagger$ is any $k \times k$ unitary matrix, and the total dimension is $d = 2k$.

We shall now give an explicit one-way LOCC protocol that is able to distil $1$ ebit from this generic class of states, deterministically, while using only a single copy,
characterised by an arbitrary unitary $V$.

To see this, first note that the states $\rho_{V}$ and $\Omega_{V}$ are isometrically related. Let $\cleancal{L}$ be the isometry whose action on the standard basis is defined as:
\bb
\begin{aligned}
\cleancal{L} |i\rangle = |ii\rangle_{AB}
\end{aligned}
\ee
Hence, \bb
\begin{aligned}
\Omega_{V} = \cleancal{L}\rho_{V} \cleancal{L}^\dagger
\end{aligned}
\ee

Let $|\phi_m\rangle$ be an eigenvector of the state $\rho_{V}$ corresponding to eigenvalue $\lambda_m$. We observe that: 
\bb
\begin{aligned}
\Omega_{V} (\cleancal{L}^{\vphantom{\dagger}}\ket{\phi_m} ) &= (\cleancal{L}^{\vphantom{\dagger}} \rho_{V} \cleancal{L}^\dagger) (\cleancal{L}^{\vphantom{\dagger}} \ket{\phi_m}) \\
&= \cleancal{L}^{\vphantom{\dagger}} \rho_{V} (\cleancal{L}^\dagger \cleancal{L}^{\vphantom{\dagger}}) \ket{\phi_m} \\
&= \cleancal{L}^{\vphantom{\dagger}} \rho_{V} \ket{\phi_m} \\
&= \cleancal{L}^{\vphantom{\dagger}} (\lambda_m \ket{\phi_m}) \\
&= \lambda_m (\cleancal{L}^{\vphantom{\dagger}} \ket{\phi_m})
\end{aligned}
\ee
It shows that if $|\phi_m\rangle$ is an eigenvector of $\rho_{V}$ with eigenvalue $\lambda_m$, then the mapped state $\cleancal{L}|\phi_m\rangle$ is an eigenvector of $\Omega_{V}$ with the exact same eigenvalue $\lambda_m$.

For an arbitrary unitary $V$, from Corollary~\ref{cor3}, the eigenvectors of the state $\rho_V$ corresponding to the non-zero eigenvalue $\lambda_m = 1/k$ are:
\bb
\ket{\phi_m} = \frac{1}{\sqrt{2}} \ket{0, m} + \frac{1}{\sqrt{2}} \sum_{p=0}^{k-1} V^\dagger_{p,m} \ket{1, p}
\ee

Applying the isometry $\cleancal{L}$, the corresponding eigenvectors of $\Omega_V$ are:
\begin{equation}
    \cleancal{L}|\phi_m\rangle = \frac{1}{\sqrt{2}} |0, m\rangle_A|0, m\rangle_B + \frac{1}{\sqrt{2}} \sum_{p=0}^{k-1} V^\dagger_{p,m} |1, p\rangle_A|1, p\rangle_B
\end{equation}

Alice and Bob use the following one-way LOCC Kraus operators $A_x \otimes B_{x,y}$:
\bb
\begin{aligned}
A_x &= \frac{1}{\sqrt{k}} \sum_{i \in \{0,1\}} \sum_{j=0}^{k-1} \omega_k^{j\cdot x} \ketbraa{i}{i, j} \\
B_{x,y} &= \frac{1}{\sqrt{k}} \sum_{j=0}^{k-1} \omega_k^{j\cdot(y-x)} \ketbraa{0}{0, j} + \frac{1}{\sqrt{k}} \sum_{j=0}^{k-1} \sum_{p=0}^{k-1} \omega_k^{y\cdot j-p\cdot x}V_{j,p} \ketbraa{1}{1, p}
\end{aligned}
\ee

 Let us apply $A_x \otimes B_{x,y}$ to the first term of $\cleancal{L}|\phi_m\rangle$. 

\bb
(A_x \otimes B_{x,y}) \left( \frac{1}{\sqrt{2}} |0, m\rangle_A|0, m\rangle_B \right) 
&= \frac{1}{\sqrt{2}} (A_x|0, m\rangle_A) \otimes (B_{x,y}|0, m\rangle_B) \\
&= \frac{1}{\sqrt{2}} \left( \frac{1}{\sqrt{k}} \omega_k^{m \cdot x}|0\rangle_A \right) \otimes \left( \frac{1}{\sqrt{k}} \omega_k^{m \cdot(y-x)}|0\rangle_B \right) \\
&= \frac{1}{k\sqrt{2}} \omega_k^{m \cdot x + m \cdot y - m \cdot x} |00\rangle_{AB} = \frac{1}{k\sqrt{2}} \omega_k^{m \cdot y} |00\rangle_{AB}
\ee
Next, we apply it to the $|1\rangle$-branch:

\bb
(A_x \otimes B_{x,y}) \left( \frac{1}{\sqrt{2}} \sum_{p=0}^{k-1} V^\dagger_{p,m} \ket{1, p}_A\ket{1, p}_B \right) 
&= \frac{1}{\sqrt{2}} \sum_{p=0}^{k-1} V^\dagger_{p,m} (A_x\ket{1, p}_A) \otimes (B_{x,y}\ket{1, p}_B) \\
&= \frac{1}{\sqrt{2}} \sum_{p=0}^{k-1} V^\dagger_{p,m} \left( \frac{1}{\sqrt{k}} \omega_k^{p\cdot x}\ket{1}_A \right) \otimes \left( \frac{1}{\sqrt{k}} \sum_{j=0}^{k-1} \omega_k^{y\cdot j} V_{j,p} \omega_k^{-p\cdot x}\ket{1}_B \right) \\
&= \frac{1}{k\sqrt{2}} \sum_{j=0}^{k-1} \omega_k^{y\cdot j} \left( \sum_{p=0}^{k-1}  V_{j,p} V^\dagger_{p,m}  \right) \ket{11}_{AB}
\ee

Since
$\sum_{p=0}^{k-1} V^{\vphantom{\dagger}}_{j,p} V^\dagger_{p,m} = (V^{\vphantom{\dagger}} V^\dagger)_{j,m} = \delta_{j,m}$
we have
\bb
\begin{aligned}
(A_x \otimes B_{x,y}) \left( \frac{1}{\sqrt{2}} \sum_{p=0}^{k-1} V^\dagger_{p,m} |1, p\rangle_A|1, p\rangle_B \right) 
&= \frac{1}{k\sqrt{2}} \sum_{j=0}^{k-1} \omega_k^{y\cdot j} \delta_{j,m} |11\rangle_{AB} \\
&= \frac{1}{k\sqrt{2}} \omega_k^{m\cdot y} |11\rangle_{AB}
\end{aligned}
\ee
Hence, 
\bb
\begin{aligned}
(A_x \otimes B_{x,y})(\cleancal{L}|\psi_m\rangle) &= \frac{1}{k\sqrt{2}} \omega_k^{m\cdot y} \left( |00\rangle_{AB} + |11\rangle_{AB} \right) \\
&= \frac{1}{k} \omega_k^{m\cdot y} |\Phi^+\rangle
\end{aligned}
\ee
Therefore, the outer product yields:
\begin{equation}
    (A_x \otimes B_{x,y})(\cleancal{L}|\psi_m\rangle\langle\psi_m|\cleancal{L}^\dagger)(A_x \otimes B_{x,y})^\dagger = \frac{1}{k^2} |\Phi^+\rangle\langle\Phi^+|
\end{equation}

Summing over all outcomes $x, y,$ and eigenvectors $m$ yields exactly $1$ ebit of entanglement:
\bb
\begin{aligned}
\Lambda_{\text{1-LOCC}}(\Omega_V) &= \sum_{x,y,m} \frac{1}{k} \left( \frac{1}{k^2} |\Phi^+\rangle\langle\Phi^+| \right) \\
&= |\Phi^+\rangle\langle\Phi^+|
\end{aligned}
\ee
\end{document}